\documentclass[english,aps,manuscript]{revtex4}
\usepackage[OT1]{fontenc}
\usepackage[latin9]{inputenc}
\usepackage{amsthm}
\usepackage{amsmath}
\usepackage{graphicx}
\usepackage{amssymb}

\makeatletter
\@ifundefined{textcolor}{}
{%
 \definecolor{BLACK}{gray}{0}
 \definecolor{WHITE}{gray}{1}
 \definecolor{RED}{rgb}{1,0,0}
 \definecolor{GREEN}{rgb}{0,1,0}
 \definecolor{BLUE}{rgb}{0,0,1}
 \definecolor{CYAN}{cmyk}{1,0,0,0}
 \definecolor{MAGENTA}{cmyk}{0,1,0,0}
 \definecolor{YELLOW}{cmyk}{0,0,1,0}
 }
\numberwithin{equation}{section}
\numberwithin{figure}{section}
\theoremstyle{plain}
\newtheorem{thm}{Theorem}
  \theoremstyle{definition}
  \newtheorem{defn}[thm]{Definition}
 \theoremstyle{definition}
  \newtheorem{example}[thm]{Example}
  \theoremstyle{plain}
  \newtheorem{prop}[thm]{Proposition}
  \theoremstyle{plain}
  \newtheorem{cor}[thm]{Corollary}

\makeatother

\usepackage{babel}

\begin{document}

\title{Extending and Characterizing Quantum Magic Games}

\author{Alex Arkhipov}

\email{arkhipov@mit.edu}

\affiliation{MIT}
\begin{abstract}
The Mermin-Peres magic square game \cite{Mermin,Peres,Aravind description}
is a cooperative two-player nonlocal game in which shared quantum
entanglement allows the players to win with certainty, while players
limited to classical operations cannot do so, a phenomenon dubbed
{}``quantum pseudo-telepathy''. The game has a referee separately
ask each player to color a subset of a $3\times3$ grid. The referee
checks that their colorings satisfy certain parity constraints that
can't all be simultaneously realized.

We define a generalization of these games to be played on an arbitrary
arrangement of intersecting sets of elements. We characterize exactly
which of these games exhibit quantum pseudo-telepathy, and give quantum
winning strategies for those that do. In doing so, we show that it
suffices for the players to share three Bell pairs of entanglement
even for games on arbitrarily large arrangements. Moreover, it suffices
for Alice and Bob to use measurements from the three-qubit Pauli group.
The proof technique uses a novel connection of Mermin-style games
to graph planarity.
\end{abstract}
\maketitle

\section{Introduction}

Our goal is to generalize the Mermin-Peres magic square and magic
pentagram game \cite{Mermin,Peres,Mermin pentagram} to be played
on more general configurations of points and lines that we will call
\textit{arrangements}. We characterize exactly which of these games
allow two quantum players to win with certainty while two classical
players cannot do so.

We will begin by introducing the Mermin magic square game and related
prior work. Afterwards, we define more general terms for talking about
magic games that will let us extend the proofs for the magic square
games to a more general class of games. Our main result is to characterize
which of these magic games are winnable by a quantum strategy, construct
the winning strategy for those games and a proof of impossibility
for the others. The construction will imply a bound on the resources
needed to win such a game.

\subsection{Quantum Nonlocal Games}

The Mermin-Peres magic-square game and the generalizations we will
study are examples of \textit{nonlocal games}. See \cite{Nonlocal}
for an overview including formal definitions. 

Nonlocal games are played by some number of cooperating players and
moderated by a referee. The game protocol is as follows: the referee
randomly asks each player a question selected from some joint distribution,
the players respond to the referee independently without communicating,
and the referee decides whether the players win or lose based on the
players' answers. For our result, we will look only at two-player
nonlocal games.

A \textit{quantum strategy} allows the players to share entanglement
prior to the game and to perform local quantum operations and measurements,
in contrast to a \textit{classical strategy}, which does not. Note
that in both cases, the game played is the same; the difference is
only in what operations the players are allowed to use to produce
their answers. Sharing randomness prior to the game cannot help classical
players, as shown by a standard convexity argument.

For some nonlocal games, players using a quantum strategy can win
with a higher probability than possible with any classical strategy.
Examples include the CHSH game and the Kochen-Specker Game \cite{Nonlocal}.
It is even possible for a quantum strategy to win with certainty,
in which case we call it a \textit{winning strategy}, even when no
classical winning strategy exists, an effect called quantum pseudo-telepathy
\cite{Nonlocal}.

Such results allow empirical demonstrations of the power of quantum
operations to correlate outcomes of causally separated events in a
classically impossible way, thereby ruling out local hidden variables.
Quantum nonlocal games demonstate the power of quantum operations
in a context without the quantitative resource bounds that commonly
appear in quantum information and quantum computing.

\subsection{Mermin-Peres Magic Square Game}

The Mermin-Peres magic square game \cite{Mermin,Peres,Aravind description}
is a two-player nonlocal game in which shared quantum entanglement
allows the players to win with certainty, while players limited to
classical operations cannot do so, a phenomenon dubbed {}``quantum
pseudo-telepathy''.
\begin{defn}
The \textbf{Mermin-Peres magic square game} is a two-player cooperative
game played by the protocol described below. The players Alice and
Bob may agree on a prior strategy in advance, but cannot communicate
once the game starts. 
\begin{enumerate}
\item The referee picks a random row $r\in\left\{ 1,2,3\right\} $ and a
random column $c\in\left\{ 1,2,3\right\} $
\item The referee sends $r$ to Alice and $c$ to Bob. 
\item Alice colors each of three cells in row $r$ in a $3\times3$ grid
either red or green, and sends this coloring to the referee.
\item Bob colors each of three cells in column $c$ in a $3\times3$ grideither
red or green, and sends this coloring to the referee.
\item The referee checks that Alice has colored red an even number of cells.
\item The referee checks that Bob has colored red an odd number of cells.
\item The referee checks that Alice and Bob have assigned the same color
to the cell in row $r$ and column $c$.
\item If all these checks succeed, then Alice and Bob win the game; otherwise
they lose.
\end{enumerate}
\end{defn}
Even though this game is defined purely as a classical game, players
can gain an advantage by using prior quantum entaglement and performing
quantum operations.
\begin{thm}
There is a quantum strategy by which Alice and Bob win the magic square
game with certainty, but no such classical strategy.
\end{thm}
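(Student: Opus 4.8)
The plan is to establish the two halves of the statement separately: that the magic square game admits no classical winning strategy, and that it admits a quantum one. For the classical lower bound, I would first invoke the convexity argument noted above to reduce to \emph{deterministic} strategies, so that a strategy for Alice amounts to a function sending each row $r$ to a $\{\mathrm{red},\mathrm{green}\}$-coloring of its three cells, and likewise Bob's strategy sends each column $c$ to a coloring of its three cells. The crucial observation is that the intersection check (item~7) must pass for \emph{every} query pair $(r,c)$, which forces Alice's and Bob's colors of cell $(r,c)$ to coincide for all $r$ and $c$; hence there is a single global coloring $M\colon\{1,2,3\}^2\to\{0,1\}$ (writing $1$ for red) whose row $r$ is Alice's answer to $r$ and whose column $c$ is Bob's answer to $c$. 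The row parity check (item~5) then says every row of $M$ sums to $0$ in $\mathbb{F}_2$, while the column parity check (item~6) says every column sums to $1$; adding up all nine entries two ways gives
\[
0=\sum_{r}\Bigl(\sum_{c}M_{rc}\Bigr)=\sum_{c}\Bigl(\sum_{r}M_{rc}\Bigr)=1+1+1=1
\]
in $\mathbb{F}_2$, a contradiction, so no deterministic --- hence, by convexity, no classical --- strategy wins with certainty.

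For the quantum strategy I would exhibit a $3\times3$ array of two-qubit observables $A_{rc}$, each Hermitian with spectrum $\{\pm1\}$, for instance
\[
\begin{pmatrix}
I\otimes Z & Z\otimes I & Z\otimes Z\\
X\otimes I & I\otimes X & X\otimes X\\
-X\otimes Z & -Z\otimes X & Y\otimes Y
\end{pmatrix},
\]
and check the two facts that make it work: the three observables in any row pairwise commute and multiply to $I\otimes I$, while the three in any column pairwise commute and multiply to $-I\otimes I$. Alice and Bob share two Bell pairs $|\Phi^+\rangle^{\otimes2}$, Alice holding one qubit of each pair and Bob the other, so their joint state $|\Phi\rangle$ is the maximally entangled state on $\mathbb{C}^4\otimes\mathbb{C}^4$. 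On receiving $r$, Alice simultaneously measures the three commuting observables in row $r$ and colors cell $(r,c)$ red iff the $c$-th outcome is $-1$; because the row product is $I\otimes I$ the three $\pm1$ outcomes multiply to $+1$, so an even number of them equal $-1$ and Alice passes item~5. Symmetrically, on receiving $c$ Bob simultaneously measures the three \emph{transposed} observables $A_{rc}^{T}$ of column $c$ and colors by the same rule; transposition preserves commutativity and sends the column product $-I\otimes I$ to $-I\otimes I$, so an odd number of Bob's outcomes equal $-1$ and he passes item~6. Finally, on the shared cell $(r,c)$ Alice is effectively measuring $A_{rc}$ on her half and Bob $A_{rc}^{T}$ on his half of $|\Phi\rangle$, and the identity $(M\otimes I)|\Phi\rangle=(I\otimes M^{T})|\Phi\rangle$ makes those outcomes perfectly correlated, so item~7 also passes and the pair wins with certainty.

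I expect the only real obstacle to be the sign bookkeeping in the observable array: one must produce a square in which all three row products equal $I\otimes I$ \emph{and} all three column products equal $-I\otimes I$ while every within-line triple still commutes --- the naive all-Pauli square fails this, since its column products do not all share a sign, which is why two entries of the bottom row have to be negated --- and one must route the transpose onto Bob's side carefully so that the overlap-cell correlations are exact without disturbing the parity structure used for items~5 and~6.
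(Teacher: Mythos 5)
Your proof is correct, but it takes a different route from the paper, which deliberately does \emph{not} prove this theorem directly: it defers to the general machinery developed later (classical winnability $\Leftrightarrow$ classical realizability $\Leftrightarrow$ even sign parity for the impossibility half; Theorem \ref{thm:Magic implies winnable} together with the explicit Pauli realization of Theorem \ref{thm:Square and pentagram are magic} for the quantum half), and cites \cite{Aravind description} for the specific case. Your classical argument is the same parity count in miniature --- summing the nine entries of the forced global matrix $M$ by rows and by columns is exactly the statement that the signing $(+1)^3(-1)^3$ has odd parity --- but it is self-contained and, importantly, is carried out for the \emph{original} magic square protocol in which Bob colors an entire column, whereas the paper's general parity telepathy game modifies the protocol so that Alice colors a single cell. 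Your quantum strategy likewise matches the paper's general one (maximally entangled state, simultaneous measurement of commuting $\pm1$ observables whose product fixes the line parity), with two points of independent care that the paper glosses over: you adjust the signs of the bottom-row operators so that \emph{all three} columns have product $-I\otimes I$ as the game's checks literally require (the textbook all-Pauli square only gives one negative column), and you route Bob's measurements through the transpose and the identity $(M\otimes I)\lvert\Phi\rangle=(I\otimes M^{T})\lvert\Phi\rangle$ to get exact correlation on the shared cell, rather than invoking the paper's ``all real eigenvectors'' caveat. The computations check out (each row triple commutes and multiplies to $I\otimes I$, each column triple commutes and multiplies to $-I\otimes I$), so this is a valid, more elementary and more literal proof of the stated theorem; what it gives up is the generality the paper is building toward.
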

We will not prove this here and instead wait to prove the result for
a more general class of games. A proof can be found in \cite{Aravind description}.
The key to winning the game comes from a construction Mermin gave
to prove a version of the Kochen-Specker Theorem\cite{Mermin}, which
has the following properies.
\begin{thm}
\label{thm:Construction magic square}There is a labeling of each
cell of the $3\times3$ squares with a quantum observables such that
\begin{itemize}
\item The eigenvalues of each observable are all $+1$ or $-1$.
\item The observables in each row commute and multiply to $+I$.
\item The observables in each column commute and multiply to $-I$.
\end{itemize}
Moreover, there is no way to do this {}``classically'' using observables
of the form $\pm I$.
\end{thm}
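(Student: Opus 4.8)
The plan splits along the statement. For the three bullet points I would exhibit an explicit labeling of the nine cells by Hermitian operators drawn from the two-qubit Pauli group and verify the conditions by direct computation; for the last sentence I would rule out any labeling by scalars of the form $\pm I$ with a parity argument.

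Concretely, I would use the Mermin--Peres square
\[
\begin{pmatrix}
X\otimes I & I\otimes X & X\otimes X\\
I\otimes Z & Z\otimes I & Z\otimes Z\\
-\,X\otimes Z & -\,Z\otimes X & Y\otimes Y
\end{pmatrix},
\]
where $X,Y,Z$ are the single-qubit Pauli matrices. Each entry is $\pm 1$ times a tensor product of Paulis, hence Hermitian, and squares to $I\otimes I$; therefore its spectrum is contained in $\{+1,-1\}$, settling the first bullet.

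For the commutation and product conditions I would invoke the standard criterion that two Pauli tensor products commute exactly when they anticommute in an even number of tensor slots and anticommute exactly when that number is odd. Running through the nine pairs of entries that share a row or a column, each such pair anticommutes in $0$ or $2$ slots, hence commutes. The row products are $(X\otimes X)^2 = I$, $(Z\otimes Z)^2 = I$, and, using $XZ=-iY$ and $ZX=iY$, $(-X\otimes Z)(-Z\otimes X)(Y\otimes Y) = (Y\otimes Y)^2 = I$, so every row multiplies to $+I$; the analogous computation gives $-I$ for each of the three columns, the two minus signs in the bottom row being precisely what makes columns $1$ and $2$ evaluate to $-I$ while keeping the third row at $+I$.

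Finally, suppose a classical labeling existed, so each cell is $\varepsilon_{ij}\,I$ with $\varepsilon_{ij}\in\{+1,-1\}$. The row conditions force $\varepsilon_{i1}\varepsilon_{i2}\varepsilon_{i3}=1$ for $i=1,2,3$ and the column conditions force $\varepsilon_{1j}\varepsilon_{2j}\varepsilon_{3j}=-1$ for $j=1,2,3$; multiplying the three row equations together, and separately the three column equations, evaluates the single product $\prod_{i,j}\varepsilon_{ij}$ as both $+1$ and $(-1)^3=-1$, a contradiction. I do not anticipate a genuine obstacle here; the one delicate point is the sign bookkeeping in the quantum square, since the naive sign-free tensor-product square leaves some columns at $+I$ rather than $-I$, and forcing all three columns to $-I$ while keeping every row at $+I$ requires carefully tracking the factors of $i$ coming from products like $XZ$ and $ZX$ and absorbing the right number of them into explicit minus signs.
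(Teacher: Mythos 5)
Your proposal is correct and matches the paper's approach: the paper likewise establishes this theorem by exhibiting an explicit two-qubit Pauli labeling (Figure \ref{fig:Magic square and pentagram with operators}, Theorem \ref{thm:Square and pentagram are magic}) and rules out the $\pm I$ labeling by the same global parity argument (Proposition \ref{pro:Classical realizable implies even}), with your explicit minus signs in the bottom row being just the sign adjustment the paper handles via Proposition \ref{pro:Quantum parity}. The only cosmetic quibble is that there are eighteen, not nine, pairs of entries sharing a row or column to check for commutation, but your criterion handles all of them correctly.
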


\subsection{Generalizations of the Magic Square Construction}

Mermin constructed another example of a quantum telepathy game called
the magic pentagram game \cite{Mermin pentagram}. This game is played
on an arrangement of ten points joined by five lines of four points
each, arranged like a five-sided star. The result \cite{Triangle}
looked at a different arrangement, a subset of the Fano plane, and
proved that there is no such quantum winning strategy for this game.
These examples suggest the generalized notion of magic games that
is explored in this paper, and raise the question of exactly which
of these games are quantum-winnable.

Prior research into generalizing Mermin's constructions focused on
understanding the observables used to win the magic square and magic
pentagram games, and finding all possible sets of such observables.
The results of \cite{Square geometry,Pentagram geometry} interpret
Mermin's construction in terms of geometrical structures on finite
rings. Our work, however, seeks to determine when there exist winning
strategies for generalized arrangement rather than to classify all
winning strategies for existing arrangements.

\subsection{Binary constraint systems}

Recently, Cleve defined games on binary constraint systems\cite{BCS},
which are a yet more general notion than the Mermin-style games defined
in this paper. A binary constraint system consists of a finite set
of constraints over finitely many binary variables. The corresponding
game has the referee pick a variable and a constraint containing it,
and ask one player to assign a value to that variable, and the other
to do so for all variables appearing in the constraint. In this view,
our generalization of Mermin games are binary constraint systems with
parity constraints, and with the restriction that every variable appears
exactly twice.

The assignment of measurements that defines the winning strategy for
the Mermin square game can be thought to be a quantum solution to
the constraints. The result \cite{BCS} shows that the existence of
a quantum solution to the constraints is both necesarry and sufficient
for the constraint game to be quantum-winnable.

Therefore, finding quantum strategies to win Mermin-style games completely
reduces to finding operators that satisfy constraints for the game
as was done in Theorem \ref{thm:Construction magic square} for the
magic square. In other words, to win a Mermin game with certainty
using shared entanglement, it suffices to construct a strategy analogous
to Mermin's strategy for the magic square game. 

Cleve's result \cite{BCS} strengthens our main result. Our results
characterize for which generalized Mermin game there exists a satisfying
operator assignment, which implies a quantum winning strategy, but
does not rule out other alternative strategies to win such games.
Combined with the result \cite{BCS} equating quantum winnability
with existence of a satisfying operator assignment, our result gives
an exact characterization of which generalized Mermin games have a
quantum winning strategy but no classical one. 

The paper \cite{BCS} also gives an ad-hoc substitution approach from
Speelman\cite{Triangle} that can be used to prove the inconsistency
of a set of parity constraints. One of our results (Theorem \ref{thm:planar implies nonmagic})
uses a similar technique in its proof, cancelling variables by contracting
edges in a planar graph corresponding to the constraints. Such a graph-theoretic
interpretation of the substitution approach resolves the question
of when it can be applied to show an inconsistency.

\section{Preliminaries and Definitions}

\subsection{Arrangements and Realizations}

The magic square and pentagram are examples of configurations on which
Mermin-style games may be played. We will call these \textit{arrangements}.
\begin{defn}
An \textbf{arrangement }$A=\left(V,\thinspace E\right)$ is a finite
connected hypergraph with vertex set $V$ and hyperedge set $E$,
where a hyperedge is a nonempty subset of $V$, such that each vertex
lies in exactly two hyperedges (\textit{connected} means the hypergraph
can't be split into two smaller disjoint hypergraphs). A \textbf{signed
arrangement} $A=\left(V,\thinspace E,\thinspace l\right)$ also contains
a labelling $l:\, E\rightarrow\left\{ +1,-1\right\} $ of each hyperedge
in $E$ with a sign of $+1$ or $-1$.
\end{defn}
We'll often represent an arrangement by drawing each vertex as a \textit{point}
and each hyperedge as a line, arc, or circle that passes through the
points it contains, which we'll generally call a \textit{line}. In
a signed arrangement, each line is labelled by $+1$ or $-1$. Note
that where the points are drawn and how they are ordered on a line
is immaterial; it only matters which points share a line.
\begin{example}
Two well-studied arrangements, the magic square and magic pentagram
are shown in Figure \ref{fig:Magic square and pentagram} with signs.
Note that while in both every hyperedge contains an equal number of
vertices, this need not be the case in general.

\begin{figure}
\includegraphics[scale=0.5]{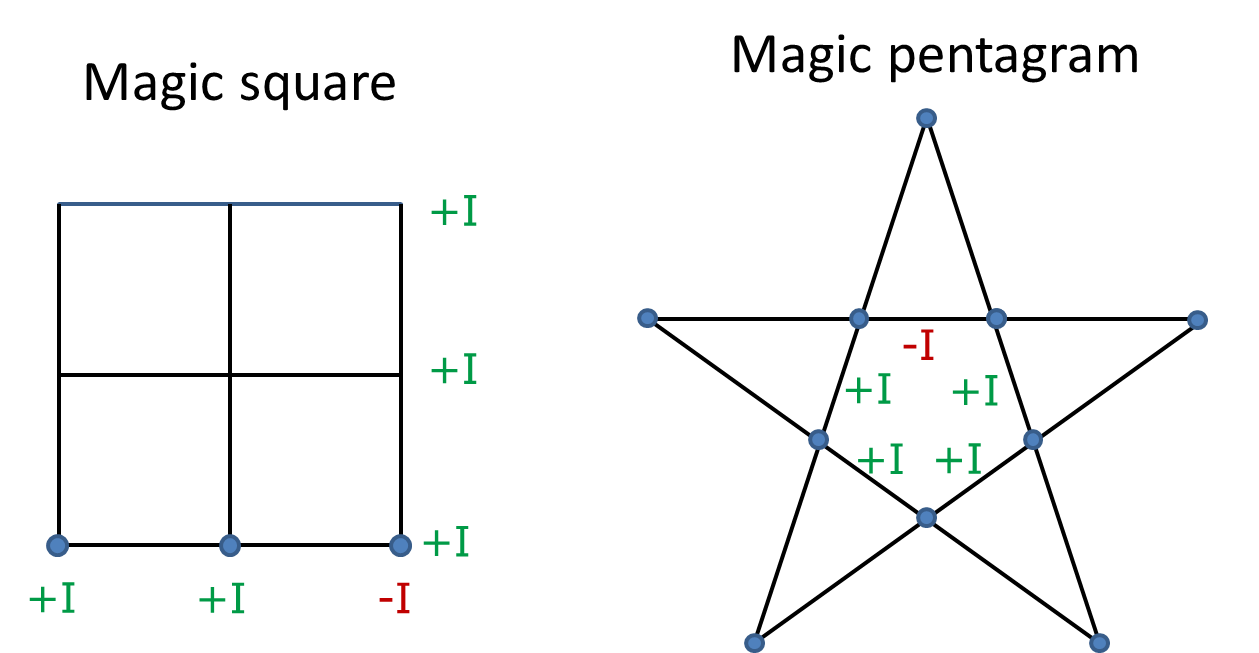}

\caption{\label{fig:Magic square and pentagram}}

\end{figure}
\end{example}
\begin{defn}
A \textbf{classical realization} of a signed arrangement $A=\left(V,\thinspace E,\thinspace l\right)$
is a labelling $c:\, V\rightarrow\left\{ +1,-1\right\} $ of the vertices
so that the product of the labels on vertices within any hyperedge
equals the label of that hyperedge:\[
\prod_{u\in e}c\left(u\right)=l\left(e\right)\mbox{ for each }e\in E\]

\end{defn}

\begin{defn}
A \textbf{quantum realization} of a labelled arrangement $A=\left(V,\thinspace E,\thinspace l\right)$
is a labelling $c:\, V\rightarrow GL\left(\mathcal{H}\right)$ of
the vertices with observables on a fixed finite-dimensional Hilbert-space
$\mathcal{H}$ such that:
\begin{itemize}
\item The observable $M$ assigned to any vertex is Hermitian and squares
to the identity ($M^{2}=I$), or equivalently, each observables orthogonally
diagonalizes with eigenvalues of $+1$ and $-1$.
\item For each hyperedge, the observables assigned to its vertices pairwise
commute.
\item For each hyperedge, the product of of the observables assigned to
its vertices equals either the identity in $\mathcal{H}$ or its negation
according to the sign of that hyperedge \[
\prod_{u\in e}c\left(u\right)=l\left(e\right)I\mbox{ for each }e\in E\]

\end{itemize}
\end{defn}
We'll say that an arrangement is \textbf{classically realizable }if
it has a classical realization and likewise \textbf{quantum realizable}
if it has a quantum realization. We note that a classical realization
is simply a quantum realization in which $\mathcal{H}=\mathbb{R}$;
therefore, every classically realizable arrangement is quantumly realizable.
An example of a quantum realization is given in Figure \ref{fig:Magic square and pentagram with operators}.

\begin{center}
\begin{figure}[h]
\includegraphics[scale=0.5]{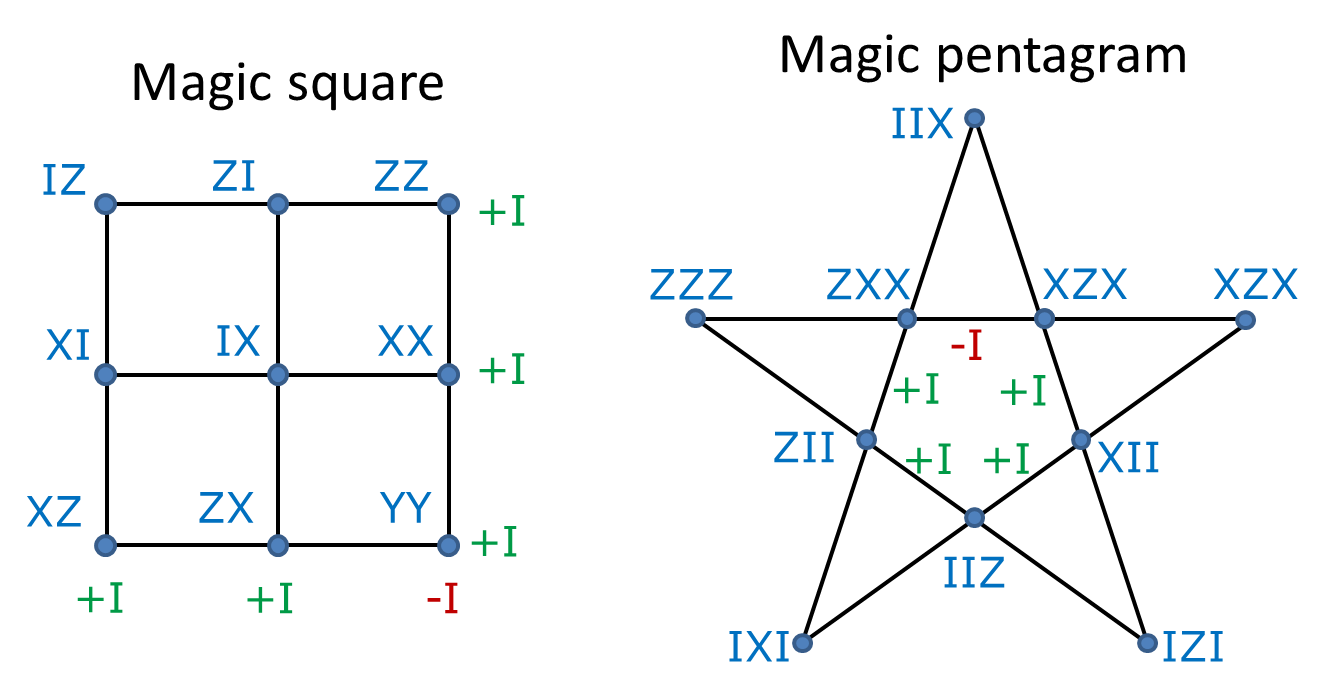}

\caption{Quantum realization of the magic square arrangement and magic pentagram
arrangements. The strings of symbols $I,X,Y,Z$ represent tensor product
of Pauli matrices. \label{fig:Magic square and pentagram with operators}}

\end{figure}

\par\end{center}

In order to take advantage of the extra freedom in constructing a
quantum realization, one must take advantage of noncommuting observables,
since using commuting observables gives no extra power beyond classical
realizability.
\begin{prop}
If a signed arrangment is quantumly realizable with observables that
all mutually commute, then it is classically realizable.\end{prop}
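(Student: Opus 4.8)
The plan is to exploit the standard fact that a family of pairwise-commuting Hermitian operators on a finite-dimensional Hilbert space can be simultaneously diagonalized, and then to read off a classical realization from a single common eigenvector. So suppose $A=\left(V,\thinspace E,\thinspace l\right)$ has a quantum realization $c:V\to GL\left(\mathcal{H}\right)$ in which all the $c\left(u\right)$ pairwise commute.

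First I would invoke the spectral theorem for commuting families: since the $c\left(u\right)$ are pairwise commuting and each is Hermitian, there is an orthonormal basis of $\mathcal{H}$ consisting of simultaneous eigenvectors of every $c\left(u\right)$. Because a quantum realization lives on a nonzero Hilbert space, we may fix one such common eigenvector $\left|\psi\right\rangle\neq 0$. Each $c\left(u\right)$ satisfies $c\left(u\right)^{2}=I$, so its eigenvalues are all $+1$ or $-1$; hence $c\left(u\right)\left|\psi\right\rangle=\varepsilon_{u}\left|\psi\right\rangle$ for some $\varepsilon_{u}\in\left\{+1,-1\right\}$. Now define $c':V\to\left\{+1,-1\right\}$ by $c'\left(u\right)=\varepsilon_{u}$. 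For each hyperedge $e\in E$, applying the operator identity $\prod_{u\in e}c\left(u\right)=l\left(e\right)I$ to $\left|\psi\right\rangle$ and iterating $c\left(u\right)\left|\psi\right\rangle=\varepsilon_{u}\left|\psi\right\rangle$ (legitimate since the $c\left(u\right)$ commute and all fix the line spanned by $\left|\psi\right\rangle$) yields $\left(\prod_{u\in e}\varepsilon_{u}\right)\left|\psi\right\rangle=l\left(e\right)\left|\psi\right\rangle$. As $\left|\psi\right\rangle\neq 0$ we get $\prod_{u\in e}c'\left(u\right)=l\left(e\right)$, which is exactly the defining condition of a classical realization, so $A$ is classically realizable.

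There is essentially no obstacle here: the only points that need a moment's care are that the simultaneous eigenbasis exists (the standard linear-algebra statement for commuting normal operators) and that $\mathcal{H}$ is nonzero so that a common eigenvector genuinely exists; the rest is bookkeeping. One could alternatively phrase the argument algebraically, noting that the $c\left(u\right)$ generate a finite-dimensional commutative $*$-algebra and evaluating any of its characters on the generators, but the eigenvector argument is the most elementary route.
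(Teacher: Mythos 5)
Your proof is correct and is essentially the same as the paper's: the paper simultaneously diagonalizes the commuting observables and reads off the $(i,i)$ diagonal entries for a fixed basis index $i$, which is exactly your choice of a common eigenvector $\left|\psi\right\rangle$ and its eigenvalues $\varepsilon_{u}$. Your version just spells out the verification that the eigenvalues satisfy the hyperedge constraints, which the paper leaves implicit.
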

\begin{proof}
Commuting observables are mutually diagonalizable. Replacing each
observable by its diagonalization via conjugation gives a new quantum
realization where the observables are diagonal matrices with diagonal
entries of $\pm1$. For any basis index $i$, the $\left(i,i\right)$
entries of each diagonalized observable give a classical realization
of the arrangement.
\end{proof}

\subsection{Sign Parities}

It turns out that whether an signed arrangement is classically realizable
or quantumly realizable depend on the hyperedge signs only in a limited
manner, as one can adjust the signs of the realization operators to
achieve different signs hyperedges signs. The only salient feature
of the hyperedge signs is whether the number of $-1$ labels is odd
or even.
\begin{defn}
The \textbf{parity }$p\left(l\right)$ of the signing $l$ of an arrangement
$A=\left(V,\thinspace E,\thinspace l\right)$ is \[
p\left(l\right)=\prod_{e\in E}l\left(e\right),\]
which is $-1$ if there's an odd number of $-1$ labels, and $+1$
if there's an even number. \end{defn}
\begin{prop}
\label{pro:Classical parity}The classical realizability of a signed
arrangement $A=\left(V,\thinspace E,\thinspace l\right)$ depends
on $l$ only via its parity $p\left(l\right)$. In other words, $A'=\left(V,\thinspace E,\thinspace l'\right)$
has the same classical realizability as $A'=\left(V,\thinspace E,\thinspace l'\right)$
if $p\left(l'\right)=p\left(l\right)$.\end{prop}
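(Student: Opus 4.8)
The plan is to reduce the statement to a purely combinatorial fact about the arrangement. Suppose $p(l')=p(l)$, and let $c$ be a classical realization of $A=(V,E,l)$; I want to produce a classical realization $c'$ of $A'=(V,E,l')$. Write $c'=c\cdot d$ for an unknown $d\colon V\to\{+1,-1\}$, meaning $c'(u)=c(u)\,d(u)$. Since $\prod_{u\in e}c'(u)=l(e)\prod_{u\in e}d(u)$, the function $c'$ realizes $l'$ exactly when $\prod_{u\in e}d(u)=l'(e)/l(e)$ for every $e\in E$; that is, $d$ must take the value $-1$ an odd number of times on precisely those hyperedges in $S=\{e\in E:\ l(e)\neq l'(e)\}$ and an even number of times on all others. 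Because $p(l)=p(l')$, the cardinality $|S|$ is even.

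So it remains to show that for any even-cardinality $S\subseteq E$ there is a $d\colon V\to\{+1,-1\}$ whose restriction to each hyperedge $e$ contains an odd number of $-1$'s precisely when $e\in S$. Here the defining property of an arrangement — that every vertex lies in exactly two hyperedges — is what makes this work: setting $d(v)=-1$ toggles the parity of the $-1$-count on exactly the two hyperedges containing $v$. Form the multigraph $G$ with vertex set $E$ in which each $v\in V$ is an edge joining the two hyperedges that contain it; $G$ is connected because $A$ is. Choosing $\{v:\ d(v)=-1\}$ amounts to choosing an edge subset $T$ of $G$, and the required parity condition says the odd-degree vertices of $(E,T)$ should be exactly $S$.

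Now I invoke the standard fact that in a connected multigraph, every even-size set of vertices $S$ arises as the odd-degree set of some edge subset: pair up the elements of $S$ arbitrarily, join each pair by a path in $G$, and let $T$ be the symmetric difference of the edge sets of these paths; since degree parities add mod $2$ and the odd-degree set of a single $s$--$t$ path is $\{s,t\}$, the odd-degree set of $T$ is the symmetric difference of the pairs, namely $S$. (Equivalently, one may phrase everything over $\mathbb{F}_2$: the map $d\mapsto(\sum_{u\in e}d(u))_{e\in E}$ has image spanned by the weight-two vectors $e_a+e_b$ over the edges $ab$ of the connected graph $G$, and for a connected graph this span is exactly the even-weight subspace of $\mathbb{F}_2^E$.) This yields the desired $d$, hence $c'$, so $A'$ is classically realizable whenever $A$ is; exchanging the roles of $l$ and $l'$ gives the converse, establishing the equivalence.

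The routine parts are the sign bookkeeping and the cited graph-theory fact; the only genuine point is the observation that the "exactly two hyperedges per vertex" hypothesis makes vertex-sign-flips act on hyperedge parities precisely the way edges of a graph act on vertex degrees, which is what lets connectedness close the argument. I do not expect any real obstacle beyond stating this correspondence cleanly.
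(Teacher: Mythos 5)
Your proof is correct and is essentially the paper's argument: both rest on the observation that negating a vertex label toggles the parity of exactly the two hyperedges containing it, and both use connectedness to find paths of hyperedges along which to flip labels. The only difference is packaging --- the paper flips one pair of differing signs at a time and iterates, while you handle the whole set $S$ at once via a symmetric difference of paths (a $T$-join), which is a slightly cleaner way of organizing the same idea.
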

\begin{proof}
Suppose $A$ is classically realizable, and let $c$ be its classical
realization. We will construct a corresponding classical realization
of $c'$. 

First, we show that the result holds when $l'$ is achieved by flipping
the signs $l$ assigns to two hyperedges, $a$ and $b$. We note that
flipping the label of a vertex flips the parity products of the two
hyperedges containing it. Since an arrangement is finite and connected,
there must be a path $e_{0},e_{1},\dots e_{n}$ of distinct edges
starting at $e_{0}=a$ and ending at $e_{n}=b$ such that any pair
of hyperedges $e_{i},e_{i+1}$ adjacent in the sequence intersects
at a vertex $v_{i}$. Then, negating the label of every vertex $v_{i}$
in the path\[
l'\left(v\right)=\begin{cases}
-l\left(v\right), & \mbox{if }v\in\left\{ v_{0},\dots,v_{n+1}\right\} \\
l\left(v\right), & \mbox{otherwise}\end{cases}\]
achieves the desired result: The product of the labels of the vertices
on edge $e_{i}$ is unaffected, as two vertices within it have flipped
labels, except the edges $e_{0}=a$ and ending at $e_{n}=b$ at the
ends of the chain.

By repeatedly changing the realization to flip pairs of signs in $l$,
one can go from any labelling to any other labelling of equal parity.\end{proof}
\begin{prop}
\label{pro:Classical realizable implies even}A classical arrangement
$A=\left(V,\thinspace E,\thinspace l\right)$ is realizable if and
only if the parity $p\left(l\right)$ is $+1$.\end{prop}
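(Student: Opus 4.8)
The plan is to prove both directions separately, using Proposition~\ref{pro:Classical parity} to reduce the problem to a single canonical signing of each parity class.

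For the forward direction, suppose $A=(V,E,l)$ is classically realizable with realization $c:V\to\{+1,-1\}$. Then multiplying the hyperedge constraints over all $e\in E$ gives
\[
p(l)=\prod_{e\in E}l(e)=\prod_{e\in E}\prod_{u\in e}c(u).
\]
Since each vertex lies in exactly two hyperedges (this is the defining property of an arrangement), every factor $c(u)$ appears exactly twice in the double product, so the whole product equals $\prod_{u\in V}c(u)^2=1$. Hence $p(l)=+1$.

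For the converse, suppose $p(l)=+1$. By Proposition~\ref{pro:Classical parity} it suffices to exhibit a classical realization of \emph{some} signing of parity $+1$ on the same underlying hypergraph $(V,E)$, since realizability depends only on the parity. The natural choice is the all-$+1$ signing $l_0\equiv+1$, which has parity $+1$; this is realized by the constant labelling $c\equiv+1$, for which $\prod_{u\in e}c(u)=1=l_0(e)$ for every $e$. Therefore $(V,E,l_0)$ is classically realizable, and by Proposition~\ref{pro:Classical parity} so is $(V,E,l)$.

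I do not expect any serious obstacle here; the statement is essentially a corollary of Proposition~\ref{pro:Classical parity} together with the counting argument exploiting the ``each vertex in exactly two hyperedges'' condition. The only point requiring a little care is making sure the forward direction genuinely uses that degree-two property (rather than connectedness), and stating the trivial base case for the converse cleanly.
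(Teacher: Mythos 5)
Your proof is correct and follows essentially the same route as the paper: the forward direction multiplies the hyperedge constraints and uses the degree-two property to see every vertex label appears squared, and the converse realizes the all-$+1$ signing by the constant labelling and invokes Proposition~\ref{pro:Classical parity}. No issues.
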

\begin{proof}
The even-parity signing where each hyperedge has sign $+1$ is realized
by assigning $+1$ to each vertex. Then, by Proposition \ref{pro:Classical parity},
any even-parity signing is classically realizable. 

No odd-parity signing is realizable, since each vertex lies in two
hyperedges, so the product of the vertex labels of each edge will
contain each vertex label twice. 

\[
\prod_{e\in E}l\left(e\right)=\prod_{e\in E}\prod_{v\in e}c\left(e\right)=\prod_{v\in V}c\left(e\right)^{2}=\prod_{v\in V}1=1\]

\end{proof}
A similar result to Proposition \ref{pro:Classical parity} follows
for quantum realizability. 
\begin{prop}
\label{pro:Quantum parity}The quantum realizability of a signed arrangement
$A=\left(V,\thinspace E,\thinspace l\right)$ depends on $l$ only
via its parity $p\left(l\right)$.\end{prop}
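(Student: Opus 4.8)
The plan is to mimic the proof of Proposition~\ref{pro:Classical parity}, replacing the $\pm 1$ vertex relabelings by conjugation-free sign adjustments on the observables. The key observation is that if $c:V\to GL(\mathcal H)$ is a quantum realization of $A=(V,E,l)$ and we replace $c(v)$ by $-c(v)$ for a single vertex $v$, then the new assignment $c'$ still consists of Hermitian square-roots of $I$, still has the pairwise-commuting property on each hyperedge (negating a matrix does not affect which matrices it commutes with), and the hyperedge products change exactly for the two hyperedges containing $v$, each flipping sign. Thus flipping one vertex observable flips the signs of exactly two hyperedges, just as in the classical case.

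From here I would reduce, as before, to the case where $l'$ is obtained from $l$ by flipping the signs of two hyperedges $a$ and $b$. Using connectedness of the arrangement, choose a path $e_0=a,e_1,\dots,e_n=b$ of distinct hyperedges with consecutive ones meeting at a vertex $v_i$ (for $i=0,\dots,n-1$). Negate $c$ at each of the vertices $v_0,\dots,v_{n-1}$ to obtain $c'$. Each intermediate hyperedge $e_i$ ($1\le i\le n-1$) contains two of these vertices, namely $v_{i-1}$ and $v_i$, so its product picks up two sign flips and is unchanged; the end hyperedges $a=e_0$ and $b=e_n$ each contain exactly one such vertex, so their products flip sign. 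Hence $c'$ is a quantum realization of $(V,E,l')$. Iterating pairwise sign flips lets us pass between any two signings of the same parity, exactly as in Proposition~\ref{pro:Classical parity}, so quantum realizability depends on $l$ only through $p(l)$.

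The only subtlety, and the point I would be careful to state precisely, is that the three defining conditions of a quantum realization are all preserved under negating a single observable: Hermiticity and squaring to $I$ are obvious, commutativity within each hyperedge is obvious, and the product condition transforms in the controlled way described above. Everything else is a direct transcription of the classical argument, so I do not expect a genuine obstacle here; the work is purely bookkeeping of which hyperedges contain an odd versus even number of flipped vertices along the path.
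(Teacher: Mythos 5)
Your proposal is correct and follows essentially the same route as the paper, which simply runs the argument of Proposition~\ref{pro:Classical parity} while verifying that negating an observable preserves Hermiticity, order two, and commutativity within each hyperedge. Your write-up is a more explicit (and slightly cleaner, e.g.\ in the indexing of the intersection vertices $v_0,\dots,v_{n-1}$) transcription of exactly that argument.
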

\begin{proof}
The proof is the same as that of Proposition \ref{pro:Classical parity},
except we also check that negating the quantum observables does not
change them having order two or mutually commuting within each hyperedge.
\end{proof}
In light of the results of Propositions \ref{pro:Classical parity}
and Proposition \ref{pro:Quantum parity}, we should think of quantum
realizability as a property of an unsigned arrangement.
\begin{defn}
An arrangement is \textbf{magic} if it has an odd-parity signing that
is quantumly realizable.
\end{defn}
So, a signed arrangement is magic if its underlying arrangement is
magic and the signing has odd parity. Note that by Proposition \ref{pro:Classical realizable implies even},
any magic signed arrangment is not classically realizable, and therefore
represents a gap in what's classically possible and what's quantumly
possible. 
\begin{thm}
\label{thm:Square and pentagram are magic}[Mermin, Peres] The magic
square and magic pentragram pentagram are magic arrangements.\end{thm}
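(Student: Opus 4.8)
The plan is to produce, for each of the two arrangements, an explicit odd-parity signing together with a quantum realization of it, which by definition makes the arrangement magic. Both realizations are exactly the ones drawn in Figure \ref{fig:Magic square and pentagram with operators}, so the substance of the proof is verifying that the displayed operators meet the three conditions in the definition of a quantum realization and that the induced signing has parity $-1$.

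For the magic square, I would take the signing that labels each of the three row hyperedges $+1$ and each of the three column hyperedges $-1$; with three $-1$ labels its parity is $(-1)^{3}=-1$, which is odd. Theorem \ref{thm:Construction magic square} supplies a quantum realization of precisely this signed arrangement: it assigns to the nine cells observables with eigenvalues in $\{+1,-1\}$ that pairwise commute within each row and within each column, multiply to $+I$ along each row, and multiply to $-I$ along each column, which is verbatim a quantum realization of the signing above. Hence the magic square is magic, with no computation beyond what Theorem \ref{thm:Construction magic square} already entails: checking commutators and products of the two-qubit Pauli tensor strings shown in the figure.

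For the magic pentagram, the ten points and five four-point lines form a genuine arrangement, since each point lies on exactly two of the five lines. I would take the signing induced by Mermin's pentagram operators, which has an odd number of lines labelled $-1$ and hence parity $-1$, and then check directly that the Pauli strings of Figure \ref{fig:Magic square and pentagram with operators} form a quantum realization: each operator, being a tensor product of $I,X,Y,Z$, is Hermitian and squares to $I$; any two operators on a common line commute, which reduces to verifying that they anticommute in an even number of tensor slots; and the ordered product along each line equals $\pm I$ with the prescribed sign. These are finitely many routine slot-by-slot Pauli computations; see \cite{Mermin pentagram}, and compare \cite{Aravind description} for the square.

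The only delicate point is that neither the within-line commutativity nor the correct value of each line product holds for an arbitrary Pauli assignment; these are exactly the nontrivial features of Mermin's particular choices. So the proof amounts to recalling those classical constructions and confirming that they instantiate the definitions introduced in this section.
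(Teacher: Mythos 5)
Your proposal is correct and follows essentially the same route as the paper: exhibit the odd-parity signings and the explicit Pauli-operator realizations from Figure \ref{fig:Magic square and pentagram with operators} and verify they satisfy the definition of a quantum realization. The paper simply asserts this with a pointer to the figure, whereas you spell out the (routine) verification steps, but the underlying argument is identical.
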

\begin{proof}
Example quantum realizations for the magic square and pentagram are
pictured in Figure \ref{fig:Magic square and pentagram with operators},
using odd-parity signings for both of them. The magic square uses
Hilbert space $\left(\mathbb{C}^{2}\right)^{\otimes2}$ and measurement
operators from the two-qubit Pauli group, and the magic pentagram
does likewise with $\left(\mathbb{C}^{2}\right)^{\otimes3}$ and the
three-qubit Pauli group.
\end{proof}

\subsection{Parity telepathy games}

We extend the Mermin magic square game to be played on an arbitrary
arrangement.
\begin{defn}
The \textbf{parity telepathy game }on a signed arrangement $A=\left(V,\thinspace E,\thinspace l\right)$
is a game played by two cooperative players (call them Alice and Bob)
and a referee. Alice and Bob may agree on a prior strategy but cannot
communicate once the game starts. They both know the signed arrangement
$A$ that the game takes place on. 
\begin{enumerate}
\item The referee picks a random vertex $v$ in $V$ and one of the two
hyperedges containing it at random.
\item The referee sends $v$ to Alice and $e$ to Bob. 
\item Alice colors $v$ with one of two {}``colors'', $+1$ and $-1$,
and sends the color $f\left(v\right)$ to the referee. 
\item Bob colors each vertex of $e$ with one of two {}``colors'', $+1$
and $-1$, and send this coloring $c:\, e\rightarrow\left\{ +1,-1\right\} $
to the referee. 
\item The referee confirms that Alice or Bob have given valid colorings,
and that each label is either $+1$ or $-1$.
\item The referee checks that the parity of Bob's coloring matches the sign
of the edge in the arrangement , that $\prod_{u\in e}c\left(u\right)=l\left(u\right)$.
\item The referee checks that Bob's coloring is consistent with Alice's
coloring of $v$, meaning that $c\left(v\right)=f\left(v\right)$
\item If both the parity and consistency checks succeed, then Alice and
Bob have won the game, otherwise they have lost.
\end{enumerate}
\end{defn}
Note that this protocol differs from the one in the magic square where
both Alice and Bob colored a hyperedge, with Alice coloring rows and
Bob coloring columns. We use this modification, which is also used
for the magic pentagram game \cite{Mermin pentagram}, because it
generalizes to arrangements that do not share the magic square game's
property that its lines can be divided into two sets (rows and columns)
that do not intersect within each set.
\begin{prop}
When limited to classical strategies, Alice and Bob win with certainty
in the parity telepathy game on an arrangement only if it is classically
realizable. \end{prop}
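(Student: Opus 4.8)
The plan is to derandomize first and then read a classical realization directly off Alice's deterministic strategy. By the standard convexity argument noted in the introduction, shared randomness cannot raise the winning probability of classical players, so if some classical strategy wins with certainty, then some \emph{deterministic} classical strategy does as well. Such a strategy is specified by a function $f:V\to\{+1,-1\}$ giving Alice's color for a queried vertex, together with, for each hyperedge $e\in E$, a coloring $c_{e}:e\to\{+1,-1\}$ giving Bob's response when queried with $e$.

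Next I would use the fact that the referee's question distribution has full support: every incidence pair $(v,e)$ with $v\in e$ is asked with positive probability. Hence a strategy that wins with certainty must pass all of the referee's checks on every such pair. The consistency check on the pair $(v,e)$ forces $c_{e}(v)=f(v)$, and letting $v$ range over all vertices of $e$ shows that $c_{e}$ is exactly the restriction $f|_{e}$. The parity check on $e$ then reads $\prod_{u\in e}f(u)=\prod_{u\in e}c_{e}(u)=l(e)$, for every $e\in E$. This is precisely the defining condition for $f$ to be a classical realization of $A$, so $A$ is classically realizable, which is the claim.

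The only step needing a modicum of care is the passage to a deterministic strategy together with the transfer of the "wins with certainty" property: writing the overall winning probability as an average over the shared random string, an average of $1$ forces the winning probability to be $1$ for (almost) every fixed value of that string, and for any such fixed value the two players' outputs are deterministic functions of their respective inputs. Beyond this, the argument is just an unwinding of the game's rules and the definition of a classical realization, so I do not anticipate any genuine obstacle.
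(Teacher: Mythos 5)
Your proposal is correct and follows essentially the same route as the paper's proof: reduce to deterministic strategies via the convexity/pre-shared-coins argument, use the consistency check (on every incidence pair, all of which occur with positive probability) to force Bob's coloring of each hyperedge to agree with Alice's vertex function $f$, and then observe that the parity checks say exactly that $f$ is a classical realization. Your write-up is somewhat more explicit than the paper's about the full support of the referee's distribution and the almost-everywhere argument for fixed random strings, but the underlying argument is identical.
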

\begin{proof}
First, we consider only deterministic strategies. Let $f\left(v\right)$
be the color Alice assigns to vertex $v$. In order to always pass
the consistency check, Bob must color each vertex as per $f\left(v\right)$.
Then, Bob passing every parity check is equivalent to $f\left(v\right)$
being a classical realization of the arrangement.

Since Alice and Bob cannot communicate after the protocol starts,
we may assume that any randomized strategy has Alice and Bob perform
all coin flips before the game. After the flips, the randomized winning
strategy would become a deterministic winning strategy.
\end{proof}
Quantum realizations give rise to quantum winning strategies, provided
an awkward technical caveat.
\begin{thm}
\label{thm:Magic implies winnable}On any magic signed arrangement
that has a quantum realization in which all the operators have all
real eigenvectors, if Alice and Bob may share quantum entanglement
in advance and perform quantum operations, then they have a strategy
that wins with certainty.\end{thm}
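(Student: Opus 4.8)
The plan is to construct the winning strategy explicitly from the hypothesized quantum realization $c:V\to GL(\mathcal H)$ and to use a maximally entangled state as the shared resource. Let $d=\dim\mathcal H$; fix an orthonormal basis of $\mathcal H$ and let Alice and Bob share $|\Phi\rangle=\frac{1}{\sqrt d}\sum_{i=1}^{d}|i\rangle_A|i\rangle_B$. Because the arrangement is magic we may take the signing to be odd-parity, and because every $c(u)$ has a real orthonormal eigenbasis, in this basis each $c(u)$ is a real symmetric matrix, so $c(u)^{T}=c(u)$. I would build everything on the standard identity $(M\otimes I)|\Phi\rangle=(I\otimes M^{T})|\Phi\rangle$, which here reads $(M\otimes I)|\Phi\rangle=(I\otimes M)|\Phi\rangle$.

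Next I would specify the strategies. On receiving a vertex $v$, Alice measures the observable $c(v)$ on her half of $|\Phi\rangle$ — its spectrum is $\{+1,-1\}$, exactly the allowed colors — and reports the outcome as $f(v)$. On receiving a hyperedge $e$, Bob uses that $\{c(u):u\in e\}$ pairwise commute (part of the definition of a realization) to perform their joint projective measurement on his half, obtaining a tuple of $\pm1$ outcomes that he reports as the coloring $c(\cdot)$ of $e$. Note that Alice's action depends only on $v$ and Bob's only on $e$, as required.

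Then I would check the two referee tests. For the parity test: on any common eigenvector of $\{c(u):u\in e\}$ the product of the eigenvalues equals the eigenvalue of $\prod_{u\in e}c(u)=l(e)I$, namely $l(e)$, so Bob's colors always multiply to $l(e)$. For the consistency test, recall $v\in e$, so measuring $c(v)$ is a coarse-graining of Bob's joint measurement and his reported value at $v$ has the same joint statistics with Alice's half as if he had measured $c(v)$ alone; writing $P^{\pm}$ for the spectral projectors of $c(v)$, the chance that Alice says $+1$ while Bob says $-1$ is $\langle\Phi|P^{+}\otimes P^{-}|\Phi\rangle=\frac1d\operatorname{Tr}(P^{+}(P^{-})^{T})=\frac1d\operatorname{Tr}(P^{+}P^{-})=0$, using $(P^{-})^{T}=P^{-}$, and symmetrically for the opposite mismatch. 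Hence both tests pass with certainty on every question pair.

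I expect the only real subtlety — and the reason the theorem carries the ``real eigenvectors'' caveat — to be the consistency test. Without that hypothesis $c(u)^{T}\ne c(u)$ in general, and one must instead have exactly one player conjugate by transposition, $c(u)\mapsto c(u)^{T}$; this still yields a valid realization (transposition preserves squaring to $I$, preserves commutation, and preserves the product relation since commuting operators multiply order-independently, so $\bigl(\prod_{u\in e}c(u)\bigr)^{T}=\prod_{u\in e}c(u)^{T}=l(e)I$), but it complicates the statement and the bookkeeping. The real-eigenvector assumption lets both players use the \emph{same} operator at each vertex, which is exactly what makes the correlation $\langle\Phi|P^{+}\otimes P^{-}|\Phi\rangle$ collapse cleanly. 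I would close by remarking that the entanglement used has local dimension $\dim\mathcal H$, and that the later sections sharpen this to three Bell pairs for the arrangements that turn out to be magic.
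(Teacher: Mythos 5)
Your proposal is correct and follows essentially the same route as the paper: share the maximally entangled state, have each player measure the realization observables on their half, derive the parity check from simultaneous diagonalization of the commuting family, and derive the consistency check from the special behavior of real-symmetric observables on the maximally entangled state. The only difference is cosmetic — where the paper invokes invariance of the Bell state under $O\otimes O$ for real orthogonal $O$, you verify the same fact by the transpose identity and the explicit computation $\langle\Phi|P^{+}\otimes P^{-}|\Phi\rangle=\frac{1}{d}\operatorname{Tr}(P^{+}P^{-})=0$, which if anything makes the role of the real-eigenvector hypothesis more transparent.
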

\begin{proof}
We begin by stating the winning strategy. Let $c$ be the quantum
realization of the arrangement on the finite-dimensional Hilbert-space
$\mathcal{H}$ and let $n=\dim\mathcal{H}$. Let $\left|1\right\rangle ,\dots,\left|n\right\rangle $
be a basis for $H$. Alice and Bob share between them the maximally
entangled state:\[
\Psi_{AB}=\sum_{i=1}^{n}\left|i\right\rangle \left|i\right\rangle \]
Then, to obtain the coloring of a vertex $v$, Alice or Bob performs
the indicated measurement $c\left(v\right)$ on their half of $\Psi_{AB}$
to obtain $+1$ or $-1$. Note that the order of Bob's measurements
doesn't matter, since all the measurements he must make commute.

We first check that this strategy always passes the parity check.
Within any hyperedge $e$, the assigned measurements commute, and
therefore are mutually diagonalize. In the diagonal basis, these measurements
are basis measurements with each basis element resulting in a $+1$
or $-1$ as labelled. Since the product of the measurements is $l\left(e\right)I$,
then the product of values corresponding to each basis element is
$l\left(e\right)$. So, the measured values satisfy the indicated
parity constraint. 

Next, we check consistency. We use the well-known fact that if two
parties each rotate their halves of a Bell state by an arbitrary real
orthogonal matrix, the Bell state remains fixed. When Alice and Bob
perform equal measurements that have real sets of eigenvectors, it
is equivalent to both performing an orthogonal followed by a standard
basis measurement that determines the outcome. The rotations leaves
the Bell pair invariant, after which the standard basis measurements
produce equal outcomes for Alice and Bob.
\end{proof}
Note that the Pauli operators in the quantum realization of the magic
square and pentagram that we provided in Figure \ref{fig:Magic square and pentagram with operators}
satisfy the real eigenvectors property and therefore suffice to win
those magic games. This will be the case for all quantum realizations
that we give. 

Note that the result of Theorem \ref{thm:Magic implies winnable}
does not imply that a nonmagic game has no winning quantum strategy
-- even if no quantum realization exists of the corresponding arrangement,
this does not rule out a quantum strategy of a completely different
type. However, a recent result of Cleve \cite{BCS} fills that gap
by proving the reverse direction in addition to the forwards one by
showing that any quantum strategy can be converted to a quantum realization.
We state a limited version of the result as it applies to generalized
Mermin games, translated into our terminology. 
\begin{thm}
[Cleve] If a parity telepathy game on an arrangement admits a perfect
quantum strategy using at most countable entanglement but no classical
strategy, then that arrangement is magic.
\end{thm}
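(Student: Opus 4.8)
The plan is to prove the converse direction — that a parity telepathy game with a perfect quantum strategy but no classical strategy forces the underlying arrangement to be magic — by a standard conversion of a perfect strategy into a quantum realization. This is essentially Cleve's theorem as cited, so the proof I would write is really a recap of that argument specialized to our parity-constraint setting.

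First I would fix the signed arrangement $A = (V, E, l)$ and observe that, since there is no classical strategy, Proposition \ref{pro:Classical realizable implies even} tells us the signing $l$ must have odd parity; so it suffices to produce a quantum realization of $A$ with this signing. Next I would set up the strategy's state and measurements: a perfect quantum strategy consists of a shared state $\ket{\psi} \in \mathcal{H}_A \otimes \mathcal{H}_B$ (countable-dimensional), for each vertex $v$ a projective measurement $\{A_v^{\pm 1}\}$ for Alice, and for each hyperedge $e$ a projective measurement on Bob's side whose outcomes are colorings $c: e \to \{+1, -1\}$ with $\prod_{u \in e} c(u) = l(e)$. (General POVMs can be dilated to projective measurements, and one may take $\ket{\psi}$ to have full Schmidt rank on each factor by restricting to its support.) From Bob's coloring-valued measurement for edge $e$ one extracts, for each $u \in e$, a $\pm 1$-valued observable $B_{u,e}$; the parity check forces $\prod_{u \in e} B_{u,e} = l(e) I$ whenever these act on the relevant part of $\ket{\psi}$, and the fact that they all come from a single projective measurement makes $\{B_{u,e}\}_{u \in e}$ mutually commuting. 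The consistency check, holding with probability one against $\ket{\psi}$, gives $A_v \otimes I \, \ket{\psi} = I \otimes B_{v,e} \, \ket{\psi}$ for each vertex $v$ and each of the two edges $e \ni v$.

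The heart of the argument is then to manufacture, on a single Hilbert space, honest $\pm1$-observables indexed by $V$ that commute along each hyperedge and multiply to $l(e) I$. The standard move is to work on Bob's space $\mathcal{H}_B$ (or the cyclic subspace generated by $\ket{\psi}$ and all the $B_{u,e}$) and to show that for a vertex $v$ lying in edges $e$ and $e'$, the two a priori different observables $B_{v,e}$ and $B_{v,e'}$ in fact agree as operators on this subspace: both equal Alice's $A_v$ transported across $\ket{\psi}$ via the canonical antilinear identification of $\mathcal{H}_B$'s support with $\mathcal{H}_A$'s support determined by a full-Schmidt-rank $\ket{\psi}$, so $B_{v,e}\ket{\psi} = B_{v,e'}\ket{\psi}$ and, because $\ket{\psi}$ is cyclic and separating for the relevant algebra, $B_{v,e} = B_{v,e'}$. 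Defining $c(v)$ to be this common observable yields the three required properties: Hermitian with square $I$ (it is a genuine $\pm1$-measurement observable), pairwise commuting within each $e$ (inherited from Bob's joint measurement for $e$), and $\prod_{u\in e} c(u) = l(e) I$ (the parity check). One technical wrinkle is finite-dimensionality: our definition of quantum realization demands a finite-dimensional $\mathcal{H}$, whereas the hypothesis allows countable entanglement, so I would note that the constructed observables all lie in a finitely generated $*$-algebra acting on the separating vector $\ket{\psi}$, and for a hyperedge-parity constraint system this algebra can be taken finite-dimensional (e.g. it is a quotient of a finite tensor product of copies of the group algebra of $\mathbb{Z}_2^{k}$), so one may compress to a finite-dimensional invariant subspace without breaking any of the identities.

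The main obstacle is precisely the step that $B_{v,e} = B_{v,e'}$ as operators, not merely on $\ket{\psi}$: one must leverage the consistency condition simultaneously with the fact that $\ket{\psi}$ is separating for the algebra generated by all of Bob's observables, which is where the full-Schmidt-rank reduction and the commuting structure within each edge are used. I would treat this by the usual vector-state / cyclic-separating argument (if $M\ket{\psi} = N\ket{\psi}$ and both $M, N$ commute with everything in the commutant acting on $\ket{\psi}$, then $M = N$ on the generated subspace), which is exactly the content of Cleve's reduction; since the statement is attributed to Cleve in the excerpt, it is legitimate here to cite \cite{BCS} for this lemma and present the specialization to parity constraints as above rather than reprove the operator-algebraic core in full.
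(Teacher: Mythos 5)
The paper does not prove this theorem at all: it is imported verbatim (``translated into our terminology'') from Cleve's work and used as a black box via the citation \cite{BCS}, so there is no in-paper argument to compare yours against. Your sketch is a faithful recap of the standard Cleve--Mittal reduction --- extract projective measurements, pass to full Schmidt rank, use the consistency check to transport Alice's observables onto Bob's side so that $B_{v,e}=B_{v,e'}$, and read off the commuting $\pm1$-observables with the right products --- and since you ultimately defer to \cite{BCS} for the operator-algebraic core, your treatment is consistent with how the paper itself handles the statement. The one place where your sketch asserts more than it can easily support is the finite-dimensionality step: the claim that the $*$-algebra generated by the observables is ``a quotient of a finite tensor product of copies of the group algebra of $\mathbb{Z}_2^k$'' and hence finite-dimensional is not justified, because the observables from different hyperedges need not commute, and the group they generate subject to the parity relations (the solution group) can be infinite. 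Compressing a countably-infinite-dimensional satisfying assignment down to the finite-dimensional realization that the paper's definition of ``magic'' demands is genuinely the delicate point in the countable-entanglement hypothesis, and it should be attributed to the cited result rather than waved through with that algebra claim.
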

As a result, our definition of magic corresponds to quantum pseudo-telepathy
being possibility for a Mermin game on that arrangement. By characterizing
which Mermin games are magic, we will then characterize which ones
exhibit quantum-pseudotelepathy.

\subsection{Intersection graphs}

We have define realized arrangements to have label vertices with measurements,
with hyperedges encoding constraints on these measurements. For our
main result, it will be convenient to switch to a dual representation
in which measurements label edges and vertices encode constraints.
In our drawings of arrangements, this corresponds to interchanging
the roles of points and lines. 
\begin{defn}
The \textbf{intersection graph }to an arrangement $A=\left(V,E\right)$
is the undirected graph $\left(V',E'\right)$ where $V'=E$, and there's
an edge between $e_{1},e_{2}\in V'$ for each vertex in the intersecton
$e_{1}\cap e_{2}$.
\end{defn}
Signed intersection graph also include a sign $\pm1$ on each vertex.
Applying a quantum realization of an arrangement to its intersection
graph, we obtain a labelling of the edges of the intersection graph
such that the labels on edges sharing a vertex commute and multiply
to the identity times the vertex's sign.

An intersection graph is equivalent to the hypergraph dual of $A$,
obtained by interchanging the roles of vertices and hyperedges. In
other words, we think of each vertex as the {}``hyperedge'' that
is the set of all the hyperedges that contain it. Since in an arrangement
each vertex lies on exactly two hyperedges, this dual hypergraph contains
only two-element hyperedges and is simply an undirected graph (technically
a multigraph).
\begin{example}
The intersection graphs of the magic square is the complete bipartite
graph on six vertices $K_{3,3}$, and the intersection graph of the
magic pentagram is the complete graph on five vertices $K_{5}$ (see
Figure \ref{fig:Magic square and pentagram dual}).
\end{example}
\begin{center}
\begin{figure}
\includegraphics[scale=0.5]{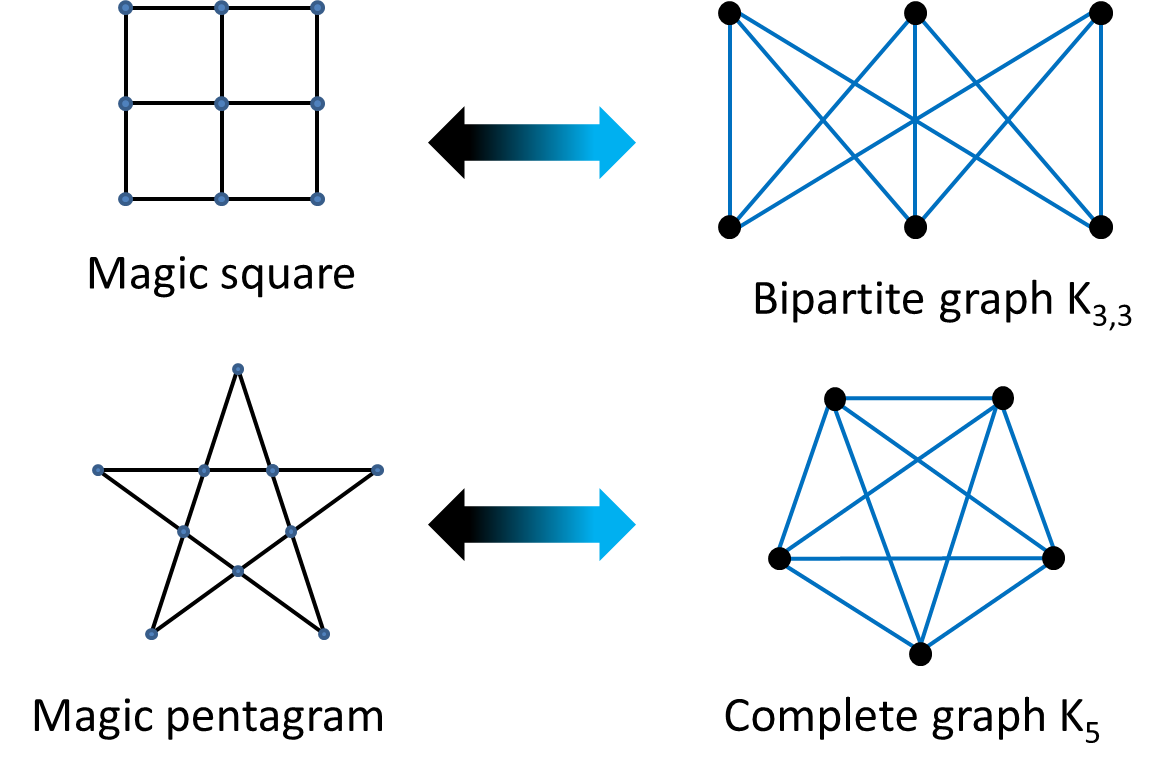}

\caption{The magic square has intersction graph $K_{3,3}$ and the magic pentagram
has intersection graph $K_{5}$.\label{fig:Magic square and pentagram dual}}

\end{figure}

\par\end{center}

\section{Main Result}

We will prove our main result. Recall that a graph is planar if it
can be embedded into the Euclidian plane so that no two edges intersect.
\begin{thm}
An arrangement is magic if and only if its intersection graph is not
planar. \label{thm:Main result}
\end{thm}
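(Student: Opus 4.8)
The plan is to work throughout with the intersection graph $G$ of the arrangement $A$ and the dual description of quantum realizations noted after the definition of the intersection graph: a quantum realization amounts to assigning to each edge of $G$ a Hermitian square root of $I$ so that the operators on edges meeting at a common vertex $v$ pairwise commute and multiply to $s(v)I$ for some sign $s(v)\in\{\pm1\}$, and by Proposition~\ref{pro:Quantum parity} the arrangement is magic precisely when some such assignment has $\prod_{v}s(v)=-1$. So the theorem reduces to: $G$ is planar if and only if every such operator assignment satisfies $\prod_{v}s(v)=+1$. I would prove the two implications separately, using Kuratowski's theorem for one and an edge-contraction argument on a planar embedding for the other.

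For ``$G$ non-planar $\Rightarrow$ $A$ magic'', I would invoke Kuratowski's theorem to get a subgraph of $G$ that is a subdivision of $K_{5}$ or of $K_{3,3}$: branch vertices $b_{1},\dots,b_{k}$ ($k=5$ or $6$) joined by internally disjoint paths, one for each edge of $K_{5}$ (resp. $K_{3,3}$). Since $K_{5}$ is the intersection graph of the magic pentagram and $K_{3,3}$ that of the magic square, Theorem~\ref{thm:Square and pentagram are magic} and Figure~\ref{fig:Magic square and pentagram with operators} supply Pauli operators $O_{ij}$ on the edges of $K_{5}$ (resp. $K_{3,3}$) realizing an odd-parity signing $s_{1},\dots,s_{k}$ with $\prod_{i}s_{i}=-1$. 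Define an operator assignment on all of $G$ by placing $O_{ij}$ on every edge of the path replacing $\{b_{i},b_{j}\}$ and $I$ on every remaining edge. Then check the constraint at each vertex: at a branch vertex $b_{i}$ the incident path-edges carry exactly $\{O_{ij}\}_{j}$ and any other incident edges carry $I$, so the constraint is that of the pentagram/square and the sign is $s_{i}$; at an internal vertex of a path the two path-edges carry the same operator $O_{ij}$, which squares to $I$, and the rest carry $I$, so the sign is $+1$; at every other vertex all incident edges carry $I$, so the sign is $+1$. The total parity is $\prod_{i}s_{i}=-1$, so $A$ admits an odd-parity quantum realization and is magic. (These operators are multi-qubit Paulis with the real-eigenvector property, so by Theorem~\ref{thm:Magic implies winnable} they also give a perfect quantum strategy using at most three Bell pairs, which is where the resource bounds in the abstract come from.) The only point needing care here is that distinct subdivision paths share no edge, so the assignment is well defined.

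For the converse, ``$G$ planar $\Rightarrow$ $A$ not magic'', I would argue by contradiction from an operator assignment with signs $s(v)$. Fix a planar embedding of $G$ and its rotation system (a cyclic order of edge-ends at each vertex), and form the total product $P=\prod_{v}\bigl(\prod_{e\ni v}c(e)\bigr)$ with the inner product taken in rotation order and the outer product in any order; since each inner factor equals $s(v)I$, a scalar, we get $P=\bigl(\prod_{v}s(v)\bigr)I$. Now contract the edges of a spanning tree of $G$ one at a time. Contracting $e_{0}=\{v_{1},v_{2}\}$ splices the two cyclic orders at the slot of $e_{0}$, and $c(e_{0})^{2}=I$ cancels the two copies of $c(e_{0})$, so the merged vertex again has its rotation-ordered product equal to the scalar $s(v_{1})s(v_{2})I$ and the value of $P$ is unchanged; contraction also preserves planarity and connectedness. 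After a full spanning tree is contracted we are left with a single vertex $w$ whose self-loops are the non-tree edges, and $P$ equals the rotation-ordered product around $w$, in which each loop's operator occurs twice. Planarity forces the $2k$ edge-ends at $w$ to form a non-crossing perfect matching (two interleaved loops would have to cross), so some pair of cyclically consecutive edge-ends belongs to one loop; the factor $c_{i}c_{i}=I$ cancels, leaving a smaller non-crossing matching, and induction gives $P=I$. Hence $\prod_{v}s(v)=+1$, so no odd-parity signing of $A$ is quantum realizable, i.e. $A$ is not magic.

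I expect the delicate part to be this converse direction: making precise that the $\pm I$ scalars permit the products to be reassociated and reordered freely, tracking how the rotation system transforms under edge contraction (the genuine use of planarity), and carrying out the non-crossing-matching cancellation at the final vertex. The non-planar direction is comparatively routine once Kuratowski's theorem is taken as input.
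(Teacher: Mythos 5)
Your proposal is correct and follows essentially the same route as the paper: Kuratowski plus the ``place the pentagram/square operators along subdivision paths and $I$ elsewhere'' construction for the non-planar direction, and edge contraction in a planar embedding with cancellation of adjacent self-loop ends (your non-crossing-matching induction is the paper's ``remove an innermost self-loop'' step) for the planar direction.
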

For ease of terminology, we'll call an intersection graph \textit{magic}
if its associated arrangement is magic. We will also talk about quantum
realizations of intersection graphs; these can be produced by taking
the operator vertex labels of an arrangement and transferring them
to the corresponding edges of the intersection graph.

We'll prove the two directions of Theorem \ref{thm:Main result} separately.

\subsection{Planar implies not magic}
\begin{thm}
\label{thm:planar implies nonmagic}If the intersection graph of an
arrangement is planar, then that arrangement is not magic.\end{thm}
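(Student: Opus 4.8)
The plan is to work in the intersection-graph picture introduced above, where the arrangement's vertices become edges of the intersection graph $G$ carrying observables and its hyperedges become vertices of $G$ carrying signs; a quantum realization then assigns a Hermitian involution $M_e$ to each edge $e$ of $G$ so that, for every vertex $w$, the family $\{M_e : e\ni w\}$ commutes and $\prod_{e\ni w}M_e=l(w)I$. Fixing such a realization of a planar arrangement, I want to prove $p(l)=\prod_{w}l(w)=+1$; since this then holds for every quantum-realizable signing of $G$, no odd-parity signing is quantum realizable, so $G$ and its arrangement are not magic. The first observation is a restatement of the goal: multiplying all the vertex constraints together gives $p(l)\,I$ on one side (each block $\prod_{e\ni w}M_e$ is a scalar, hence central, so the order is irrelevant), and on the other a fixed word in which each $M_e$ occurs exactly twice. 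So it is equivalent to show that, when $G$ is planar, a product of $2|E(G)|$ observables of this form collapses to $I$ using only the relations $M_e^2=I$ and the commutations between observables on edges sharing a vertex; since $\mathcal H\neq0$, this forces $p(l)=+1$.

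I would organize the collapse by variable elimination along a spanning tree, which is exactly edge contraction in $G$ and mirrors Speelman's substitution method. First dispatch a few easy reductions that preserve planarity and $p(l)$: a degree-one vertex forces its incident observable to equal $\pm I$ and its edge can be deleted; a degree-two vertex forces its two incident observables to be equal up to sign, so it can be suppressed (the surviving observable inherits all the needed commutations); two parallel edges can be merged into one carrying the product of the two (commuting) observables; and a self-loop contributes a factor $M_eM_e=I$ and can be dropped. Then, on a connected planar $G$, pick a spanning tree $T$, root it, and eliminate the tree-edge observables from the leaves up, in an order dictated by a planar embedding. Eliminating $M_e$ for $e=uv$ (with $u$ the child) via the constraint at $u$ rewrites $M_e$ as a signed product of the other observables at $u$; substituting into the constraint at $v$ deletes $e$, merges $u$ into $v$, and multiplies the two signs. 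After all of $T$ is processed one is left with a single residual equation $\mathcal W=p(l)\,I$, where $\mathcal W$ is a word in the co-tree observables; two bookkeeping checks accompany this — that $p(l)$ is visibly unchanged at each step, and that by Euler's formula the number of co-tree edges is $|F|-1$, one less than the number of faces of the embedding, which is the quantity the collapse will consume.

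The step I expect to be the main obstacle is showing that, for planar $G$, the residual word $\mathcal W$ always collapses to $I$. Algebraically $\mathcal W$ is a word in the co-tree observables whose two occurrences of each observable one wants to cancel down to $I$ using only the permitted commutations $[M_e,M_{e'}]=0$ for $e,e'$ sharing a vertex of $G$; in worked examples one sees that the swaps one is forced to perform are always between observables on edges that do share a vertex, precisely because a planar embedding lets $\mathcal W$ be realized, up to such legal swaps, as a non-crossing system of chords pairing the two occurrences of each observable, which then cancels innermost pair first. Pinning down that a planar embedding simultaneously supplies the right tree $T$, the right elimination order, and the availability of every commutation needed to untangle $\mathcal W$ is the technical heart of the proof — and it is exactly here that the argument must fail for $K_5$ and $K_{3,3}$, whose residual words get stuck at an expression such as $(M_eM_{e'})^{k}$ with $M_e$ and $M_{e'}$ not commuting, leaving $p(l)$ free. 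Once the collapse is established, $p(l)\,I=\mathcal W=I$ yields $p(l)=+1$, so a planar arrangement admits no odd-parity quantum realization and is not magic.
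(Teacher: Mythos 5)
You have set the problem up correctly (pass to the intersection graph, multiply the vertex constraints to get a word in which each $M_e$ appears twice, and show this word equals $+I$ so that $p(l)=+1$), and your instinct to proceed by contraction/substitution is the same as the paper's. But the proposal has a genuine gap, and you locate it yourself: the claim that for planar $G$ the residual word $\mathcal W$ always collapses to $I$ is exactly where planarity must enter, and you offer only a heuristic (``non-crossing system of chords, cancel innermost pair first'') rather than a proof. Worse, the mechanism you lean on --- that the only swaps ever needed are ``legal'' commutations $[M_e,M_{e'}]=0$ for edges sharing a vertex --- is not stable under your own reductions: after a single contraction, the observables on edges incident to the merged vertex need \emph{not} commute (the paper notes this explicitly), so neither your parallel-edge merge nor your degree-two suppression can be applied freely mid-process, and the inductive structure of the elimination is not actually established.

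The missing idea is to replace ``commutation at a vertex'' by a weaker invariant that the planar embedding makes available and that \emph{is} preserved by contraction: the product of the edge labels around each vertex, taken in the cyclic (counterclockwise) order given by the embedding, equals $l(w)I$. This holds initially because the operators at a vertex commute, and it is preserved when contracting an edge because a word equal to $\pm I$ may be cyclically rotated for free (conjugate by a prefix), so the two cyclic words at the endpoints can be spliced along the contracted edge and the signs multiply. No commutation is needed after the initial step. At the end one has a single vertex with self-loops; an \emph{innermost} self-loop in the embedding has its two occurrences adjacent in the cyclic word, so it cancels via $M^2=I$ alone, and repeating this empties the word and forces the sign to be $+1$. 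Your chord picture is gesturing at precisely this (innermost self-loops are the innermost chords), but without the cyclic-order invariant you have no licence to perform the cancellations, and the argument as written does not go through.
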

\begin{proof}
We show that any realization of a signed version of this arrangement
must have even parity, and therefore is not magic. Out strategy will
to be to collapse the algebraic constraints on the vertex quantum
operators by repeatedly cancelling variable terms until we reach a
contradiction.

Recall that a quantum realization on a signed intersection graph labels
each edge with a measurement operator such that edges sharing a vertex
commute and multiply to the identity times a sign equal to that vertex's
label. Consider a planar embedding of this quantum realization of
this signed intersection graph. To contract an edge, delete that edge
and merge the two endpoint vertices into one, with edges that pointed
to one of these two vertices now pointing to the new vertex, and set
the new vertex's sign to be the product of the signs of the two merged
vertices. Merging may cause self-loops and multiple parallel edges.

We observe that contraction preserves the following properties:
\begin{itemize}
\item The product of the labels of edges around any vertex, in cyclic order
(say, without loss of generality, counterclockwise), equals identity
times the vertex's sign (a self-loop on a vertex will have its edge
label appear twice in the product.) We prove this here. This is only
nontrivial to check for the newly formed vertex. First, note that
if label the operators $M_{1},\dots,M_{n}$ going in a circle from
any starting point, if $M_{1}M_{2}\dots M_{n}=I$, then for any starting
point, $M_{k}M_{k+1}\dots M_{n}M_{1}M_{2}M_{k-1}=I$, since \begin{eqnarray*}
M_{k}M_{k+1}\dots M_{n}M_{1}M_{2}M_{k-1} & = & \left(M_{k-1}\dots M_{2}M_{1}\right)\left(M_{1}M_{2}\dots M_{n}\right)\left(M_{1}\dots M_{k-1}\right)\\
 & = & \left(M_{k-1}\dots M_{2}M_{1}\right)I\left(M_{1}\dots M_{k-1}\right)\\
 & = & I\end{eqnarray*}
A similar result follows with $-I$ in place of $I$. Now, consider
an edge labelled by an operator $X$, and let the labels around its
endpoint vertices be $M_{1},M_{2},\dots M_{m},X$ and $X,N_{1},N_{2},\dots N_{n}$
respectively going counterclockwise, and let the signs of the two
vertices be $\alpha_{M}$ and $\alpha_{N}$. Then, \begin{eqnarray*}
M_{1}\dots M_{n}X & = & \alpha_{M}I\\
XN_{1}\dots N_{n} & = & \alpha_{N}I\end{eqnarray*}
Mutiplying these gives \[
M_{1}\dots M_{n}N_{1}\dots N_{n}=\alpha_{M}\alpha_{N}I\]
The left hand side is the cyclically ordered product of edge labels
around the newly formed vertex, and the right hand side is the identity
with the sign of the new vertex, so the invariant remains. Note that
it may no longer be true that operators whose edges share a vertex
commute.
\item The graph embedding remains planar.
\item The sign parity (product of all the vertex labels) remains the same. 
\end{itemize}
Since each contraction reduces the number of vertices by $1$, contracting
any sequence of edges eventually produces a graph with a single vertex.
The sign of this vertex equals the product of the labels of all vertices
of the original intersection graph, and therefore equals its parity.
We will show that this parity is $+1$. 

It is easy to check that removing any self-loop that does not enclose
anything in the planar embedding also preserves the stated invariants;
such a self-loop contributes its operator twice in sequence to a cyclic
product, which cancels. Since there's always an innermost self-loop,
we may repeatedly remove such self-loops until none remain. So, the
sign of this vertex must equal the empty product, or $+I$. But, since
the sign parity has been preserved throguhout the process, this implies
that the original arrangement has sign parity $+1$ and therefore
has a classical realization and is not magic.
\end{proof}
The preceding proof gives a constructive way to derive a contradiction
from the algebraic constraints of any nonmagic game by combining the
constraints by multiplication and cancelling pairs of adjacent variables.
This is the same as the substitution approach from \cite{BCS}, first
used in a proof by Speelman\cite{Triangle}. See the section Discussion
for more on the connection.

\subsection{Nonplanar implies magic}

In this section, we prove the forward direction of the main result
(Theorem \ref{thm:Main result}).
\begin{thm}
If the intersection graph of an arrangement is nonplanar, then the
arrangement is magic.
\end{thm}
This proof will come in two pieces. First, we'll show that if an intersection
graph contains a magic intersection graph as a topological minor,
then it is magic. (Recall that we're saying an intersection graph
is magic as a shorthand for its associated arrangement being magic.)
We then using the well-known theorem of Pontyagin and Kuratowski that
any nonplanar graph contains either the complete graph $K_{5}$ or
the bipartite complete graph $K_{3,3}$ as a topological minor, and
that both of these intersection graphs are magic.
\begin{defn}
A graph $H$ is a \textbf{topological minor} of $G$ if $G$ has a
subgraph that is isomorphic to a subdivision of $H$, where a subdivision
is obtained by replacing each edge by a simple path of one or more
edges.
\end{defn}
Note that {}``subgraph'' as used in the above definition alows both
deleting edges and deleting vertices. By considering the isomorphism
explicitly, we obtain the following equivalent definition.
\begin{defn}
A graph $H$ is a topological minor of $G$ if there is an embedding
of $H$ in $G$ that consists of an injective map $\phi$ that takes
each vertex $v$ of $H$ to a vertex $\phi\left(v\right)$ of $G$,
and a map from each edge $\left(u,v\right)$ of $H$ to a simple path
from $\phi\left(u\right)$ to $\phi\left(v\right)$ in $G$, such
that these paths are disjoint except on their endpoints.\end{defn}
\begin{thm}
\label{thm:Topological minor transfers magic}If an intersection graph
$H$ is a topological minor of an intersection graph $G$, then $H$
being magic implies that $G$ is magic.\end{thm}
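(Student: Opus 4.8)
The plan is to lift a magic (i.e.\ odd-parity, quantumly realizable) signed structure from $H$ up to $G$ along the topological embedding $\phi$. Recall that a quantum realization of an intersection graph assigns to each \emph{edge} an order-two Hermitian observable so that the observables on edges meeting at a common vertex pairwise commute and multiply to $\pm I$ according to that vertex's sign, and that magic means there is such a realization with odd sign parity (product of all vertex signs equal to $-1$). So suppose $H$ has a quantum realization $c_H$ on a Hilbert space $\mathcal H$ with odd parity. I want to produce a realization of $G$ on the same (or a slightly enlarged) Hilbert space, also with odd parity.

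First I would handle the two basic operations that build a subdivision-subgraph of $G$ from $H$: subdividing an edge, and adding an isolated ``inert'' edge/vertex. For subdivision: if $e=(u,v)$ of $H$ carries observable $M=c_H(e)$, replace $e$ by a path $u - w - v$; assign $M$ to \emph{both} new edges and give the new internal vertex $w$ the sign $+1$. At $w$ the two incident observables are $M$ and $M$, which commute and multiply to $M^2=I$, matching sign $+1$; at $u$ and $v$ the incident-observable multiset is unchanged, so their constraints and signs are untouched; and the sign parity is unchanged since we only added a $+1$ vertex. Iterating, any subdivision of $H$ gets a quantum realization with the same parity. To get from the subdivision $H'$ to a subgraph-of-$G$ situation and then to all of $G$, I extend the realization to the remaining edges of $G$: assign the identity observable $I$ to every edge of $G$ not used by the embedding, and to every vertex of $G$ not in the image assign sign $+1$ (a vertex all of whose incident edges are labelled $I$ trivially satisfies commutation and has product $I$). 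For vertices of $G$ that \emph{are} in $\phi(V(H))$ but pick up extra incident edges beyond those coming from $H'$, those extra edges are all labelled $I$, so they do not disturb commutation and contribute a factor $I$ to the product — the vertex constraint and sign are inherited unchanged from $H$. Net effect: $G$ gets a signed quantum realization whose sign parity equals that of $H$, namely $-1$, so $G$ is magic.

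The step I expect to need the most care is the bookkeeping at the embedding-image vertices and along the paths: one must verify that assigning $I$ to the ``off-path'' edges of $G$ really does leave every vertex constraint intact, including the possibility that a single vertex of $G$ is the endpoint of several of the embedding paths (it is the image of a vertex of $H$ of that degree, so the relevant observables are exactly those from $H$, and the rest are $I$), and that no vertex of $G$ gets an inconsistent sign demand. I should also double-check the convention about cyclic/ordered products versus unordered products: the realization conditions as stated only require pairwise commutation and a product equal to $\pm I$, so order is immaterial at a genuinely realized vertex, and inserting extra $I$ factors is harmless; there is no planarity or cyclic-order hypothesis in play here (that enters only in Theorem \ref{thm:planar implies nonmagic}), so I need not track any embedding of $G$ in the plane.

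Finally I would note the one subtlety flagged elsewhere in the paper — the ``real eigenvectors'' caveat needed to turn a realization into a winning strategy (Theorem \ref{thm:Magic implies winnable}): since subdivision reuses the same observables and the added edges are labelled $I$ (which has real eigenvectors), the constructed realization of $G$ inherits the real-eigenvector property from that of $H$. Thus the lift is compatible with the stronger ``winnable'' conclusion as well, though for the statement as given — $H$ magic $\implies$ $G$ magic — only the existence of an odd-parity quantum realization is required, and the argument above supplies it.
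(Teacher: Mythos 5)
Your proposal is correct and is essentially the paper's own proof: both push the odd-parity realization of $H$ through the embedding by repeating each edge-operator along its image path (so internal path vertices see $M\cdot M = I$ and get sign $+1$), labelling all remaining edges of $G$ with $I$ and all remaining vertices with $+1$, and observing that the sign parity is preserved. The only differences are presentational (you factor the construction into subdivision plus extension, and you additionally note the inherited real-eigenvector property), not substantive.
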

\begin{proof}
We will give a construction to turn a quantum realization of $H$
into one of $G$. Choose arbitrarily some odd-parity signing of $H$,
and let $c$ be a quantum realization of the corresponding arrangement
over some Hilbert space. We will use the topological minor inclusion
map to assign a corresponding signing and quantum realization on $G$
on the same Hilbert space, as follows:
\begin{enumerate}
\item For each vertex of $H$, label the corresponding vertex of $G$ with
the same sign. Label all other vertices of $G$ as $+1$.
\item For each edge of $H$, label each edge of the corresponding path in
$G$ with the same quantum operator. Label the remaining edges of
$G$ as $I$.
\end{enumerate}
We now show that this gives a quantum realization of $G$. We check
each of the required properties of a quantum realization, as interpreted
in the language of intersection graphs.
\begin{itemize}
\item Each measurement $M$ assigned to $G$ is either one in $H$ or the
identity, and therefore is Hermitian and has order $2$.
\item Each vertex of $G$ corresponding to a vertex of $H$ touches the
same measurement operators on its edges plus copies of the identity.
Therefore, these operators commute and have the same product as for
the vertex in $H$, which is labelled with the same sign.
\item Each vertex of $G$ that lies on a path that is the image on a edge
in $H$ touches two edges labelled with the same operator from that
edge in $H$, and possibly copies of the identity. These clearly commute
and multiply to $+I$, this vertex's label.
\item Each other vertex of $G$ only touches edges labelled as $I$, which
commute and have the correct product $+I$.
\end{itemize}
\end{proof}
\begin{thm}
(Pontryagin-Kuratowski) \label{thm:PK}A graph is nonplanar if and
only if it contains $K_{5}$ or $K_{3,3}$ as a topological minor.
\end{thm}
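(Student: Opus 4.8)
This statement is the classical Kuratowski--Pontryagin planarity criterion, so in a self-contained treatment I would either cite it directly or reconstruct the standard proof along the lines below.

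\emph{Easy direction.} First I would verify that $K_5$ and $K_{3,3}$ are themselves nonplanar using Euler's formula: a simple planar graph on $n\ge 3$ vertices has at most $3n-6$ edges, and a simple bipartite planar graph at most $2n-4$; since $K_5$ has $10>9$ edges and $K_{3,3}$ has $9>8$ edges, neither embeds in the plane. Then I would observe that planarity is inherited by subgraphs and is unaffected by suppressing degree-two vertices, so any graph containing a subdivision of $K_5$ or $K_{3,3}$ is nonplanar. This handles the ``contains a Kuratowski subgraph $\Rightarrow$ nonplanar'' direction.

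\emph{Hard direction.} For the converse I would argue the contrapositive via a counterexample $G$ with the fewest edges: $G$ is nonplanar but contains no subdivision of $K_5$ or $K_{3,3}$. The first reduction is to show $G$ may be assumed $3$-connected. If $G$ has a cut vertex or a separation $(G_1,G_2)$ with $|V(G_1)\cap V(G_2)|\le 2$, I would add a virtual edge joining the (at most two) separator vertices inside each side, obtaining strictly smaller graphs; a Kuratowski subdivision in either would pull back to one in $G$, so by minimality each side is planar, and the two plane embeddings glue along the separator to embed $G$ --- a contradiction. Hence $G$ is $3$-connected, so $|V(G)|\ge 5$. For the $3$-connected case I would use the contraction step at the heart of Thomassen's argument: every $3$-connected graph on at least five vertices has an edge $e=xy$ whose contraction $G/e$ is again $3$-connected. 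One checks $G/e$ still has no Kuratowski subdivision, so by minimality it has a plane embedding, which (being $3$-connected) can be taken with every face bounded by a cycle. Deleting the image of the contracted vertex $v$ exposes a face bounded by a cycle $C$ carrying all neighbours of $x$ and of $y$. I would then try to reinsert $x$, $y$, and the edge $xy$ inside this disk: either the neighbourhoods of $x$ and $y$ sit non-interleavingly around $C$ and the insertion succeeds, contradicting nonplanarity; or they interleave, and that interleaving directly exhibits a subdivision of $K_{3,3}$ (or, in the special configuration with three common neighbours on $C$, of $K_5$), contradicting the choice of $G$.

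\emph{Main obstacle.} The routine parts are Euler's formula and the low-connectivity gluing; the real work --- and the step I expect to be hardest --- is the $3$-connected case: both establishing that $3$-connectivity survives a suitably chosen edge contraction together with the absence of a Kuratowski subdivision, and carrying out the interleaving case analysis that actually manufactures the $K_5$ or $K_{3,3}$ when the embedding of $G/e$ cannot be extended to $G$.
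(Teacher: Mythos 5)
The paper offers no proof of this statement at all: it is cited as the classical Pontryagin--Kuratowski theorem, with only a remark that the minor and topological-minor formulations coincide for $K_5$ and $K_{3,3}$ and that only the ``nonplanar $\Rightarrow$ contains a Kuratowski subgraph'' direction is used. Your sketch is therefore doing genuinely more work than the paper asks for, and it is a correct outline of the standard modern proof: Euler-formula counts for the easy direction, reduction to the $3$-connected case by gluing along separators of size at most two, and Thomassen's contractible-edge induction for the $3$-connected case. Two points deserve a little more care if you were to write it out in full. First, in the low-connectivity reduction you must check that a Kuratowski subdivision in $G_i$ plus the virtual edge lifts to one in $G$ (the virtual edge is replaced by a path through the other side of the separation). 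Second, the step ``$G/e$ still has no Kuratowski subdivision'' is not immediate, because contracting an edge can turn a topological minor into a mere minor; the usual fix is the auxiliary lemma that any graph with $K_5$ or $K_{3,3}$ as a \emph{minor} already contains one of them as a \emph{topological} minor (a branch vertex of degree $4$ in a contracted $K_5$ either splits into a genuine degree-$4$ subdivision vertex or forces a $K_{3,3}$ subdivision). You flag the right places as the hard parts, so the proposal is sound as a sketch; for the purposes of this paper, simply citing the theorem, as the author does, would also have been acceptable.
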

Other statements of this theorem use the stronger notion of graph
minors rather than topological minors, but in this case the notions
are equivalent. Note that we will only be using one direction of this
result, that every nonplanar graph contains $K_{5}$ or $K_{3,3}$. 
\begin{prop}
\label{pro:Excluded minors magic}The intersection graphs $K_{5}$
and $K_{3,3}$ are both magic.\end{prop}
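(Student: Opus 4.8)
The plan is to deduce this from Theorem~\ref{thm:Square and pentagram are magic}, which already asserts that the magic square and magic pentagram arrangements are magic, together with the earlier Example identifying their intersection graphs. Recall that ``$K_{3,3}$ is magic'' is by definition shorthand for ``the arrangement whose intersection graph is $K_{3,3}$ is magic'', so the one point worth spelling out first is that an arrangement is uniquely recoverable from its intersection graph. Indeed, the intersection graph is the hypergraph dual of the arrangement, and since in any arrangement every vertex lies in exactly two hyperedges, this dual is an ordinary (multi)graph; dualizing a second time --- forming the arrangement whose vertices are the edges of the graph and whose hyperedges are the stars (edge-neighborhoods) of the graph's vertices --- returns the original arrangement up to isomorphism. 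Hence there is no ambiguity in speaking of ``the'' arrangement of $K_{3,3}$ or of $K_5$, and it suffices to exhibit these as the intersection graphs of arrangements already known to be magic.

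Next I would record the two combinatorial identifications. The magic square has three row-hyperedges and three column-hyperedges; every row meets every column in exactly one cell, while no two rows (and no two columns) share a cell, so its intersection graph is the complete bipartite graph on $3+3$ vertices with nine edges, namely $K_{3,3}$. The magic pentagram has five line-hyperedges, any two of which meet in exactly one of its ten points, so its intersection graph is the complete graph on five vertices with ten edges, namely $K_5$. These are precisely the identifications stated in the earlier Example.

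Combining these with Theorem~\ref{thm:Square and pentagram are magic} then closes the argument: the magic square arrangement is magic and its intersection graph is $K_{3,3}$, so $K_{3,3}$ is magic; the magic pentagram arrangement is magic and its intersection graph is $K_5$, so $K_5$ is magic. If one prefers a self-contained construction at the level of graphs, one simply transfers the two-qubit Pauli labels of Figure~\ref{fig:Magic square and pentagram with operators} from the cells of the square to the edges of $K_{3,3}$, and the three-qubit Pauli labels from the points of the pentagram to the edges of $K_5$, retaining an odd-parity signing on the vertices; the quantum-realization conditions for an intersection graph (the operators on the edges around each vertex commute and multiply to that vertex's sign times $I$) are then literal restatements of the row/column and line conditions already verified for the arrangements. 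I do not expect a genuine obstacle here --- all the substance sits in Theorem~\ref{thm:Square and pentagram are magic} and in the two elementary graph identifications --- and the only mildly delicate point is the faithfulness of the arrangement-to-intersection-graph correspondence, which is why I would make the duality remark explicit before invoking the definition of a magic intersection graph.
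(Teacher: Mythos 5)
Your proposal is correct and matches the paper's own proof, which likewise just identifies $K_{3,3}$ and $K_5$ as the intersection graphs of the magic square and pentagram and invokes Theorem~\ref{thm:Square and pentagram are magic}. The extra remarks on the faithfulness of the duality and the explicit transfer of Pauli labels are sound elaborations of the same argument, not a different route.
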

\begin{proof}
The intersection graph $K_{5}$ corresponds to the magic pentagram
arrangement and $K_{3,3}$ to the magic square arrangement, which
we showed to be magic in Theorem \ref{thm:Square and pentagram are magic}.
\end{proof}
Combining Theorems \ref{thm:Topological minor transfers magic} and
\ref{thm:PK} and Proposition \ref{pro:Excluded minors magic} gives
the result stated at the start of the section.
\begin{thm}
If the intersection graph of an arrangement is nonplanar, then the
arrangement is magic.
\end{thm}

\section{Discussion }

The construction of quantum realizations for magic games suggests
that the magic square and magic pentagram are {}``universal'' for
magic games -- their quantum realizations give quantum realizations
for any magic arrangement. The fact the magic square uses only two-qubit
Pauli measurements and the magic pentagram uses three-qubit Paulis
(Theorem \ref{thm:Square and pentagram are magic}) therefore gives
a bound on the measurements necesarry to win a magic game. 
\begin{cor}
Any magic arrangement has a quantum realization with operators takes
from the three-qubit Pauli group.
\end{cor}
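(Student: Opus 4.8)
The plan is to chase the chain of results already assembled in the paper, keeping careful track of which operators appear. If an arrangement is magic, then by Theorem~\ref{thm:Main result} its intersection graph $G$ is nonplanar, so by the Pontryagin--Kuratowski theorem (Theorem~\ref{thm:PK}) $G$ contains $H=K_5$ or $H=K_{3,3}$ as a topological minor. In either case $H$ is magic (Proposition~\ref{pro:Excluded minors magic}), and the proof of Theorem~\ref{thm:Topological minor transfers magic} is constructive: it converts a quantum realization of $H$ into one of $G$ on the \emph{same} Hilbert space in which every edge of $G$ is labelled either by the identity or by a verbatim copy of an operator from the realization of $H$. So it is enough to start from a realization of $H$ whose operators all lie in the three-qubit Pauli group and then invoke this transfer.

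First I would fix the base cases. For $H=K_5$, the magic pentagram realization of Theorem~\ref{thm:Square and pentagram are magic} already uses operators from the three-qubit Pauli group over $(\mathbb{C}^2)^{\otimes 3}$. For $H=K_{3,3}$, the magic square realization of Theorem~\ref{thm:Square and pentagram are magic} uses two-qubit Paulis over $(\mathbb{C}^2)^{\otimes 2}$; I would pad it to three qubits by replacing each operator $M$ with $M\otimes I$. Since $(A\otimes I)(B\otimes I)=AB\otimes I$, this preserves Hermiticity, order two, pairwise commutation within each hyperedge, and the signed product $\pm I$, so it remains a valid realization, now with three-qubit Pauli operators. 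I would also note that the identity $I^{\otimes 3}$ is itself an element of the three-qubit Pauli group, so the ``filler'' labels introduced by the transfer construction are Paulis too.

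Applying Theorem~\ref{thm:Topological minor transfers magic} to this three-qubit realization of $H$ along its topological-minor embedding into $G$ then produces a realization of an odd-parity signing of $G$ -- odd because the signing it builds inherits the parity of the chosen odd-parity signing of $H$ -- in which every edge label is either $I^{\otimes 3}$ or a three-qubit Pauli coming from $H$. Transferring these edge labels back to the vertices of the original arrangement gives the claimed quantum realization with operators in the three-qubit Pauli group. The main thing to get right is the bookkeeping in this last step: confirming that the transfer construction really outputs a realization of the full arrangement carrying an odd-parity signing, and that the class ``three-qubit Pauli group'' is closed under the only two operations used, namely tensoring with $I$ on one extra qubit and copying operators unchanged. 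A minor subtlety is that an intersection graph may be a multigraph when two hyperedges meet in several points, but parallel edges affect neither planarity nor the application of Pontryagin--Kuratowski, so this presents no real obstacle.
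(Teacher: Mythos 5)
Your proposal is correct and follows exactly the route the paper intends: the corollary is stated as a consequence of the chain nonplanarity $\Rightarrow$ $K_5$ or $K_{3,3}$ topological minor $\Rightarrow$ transfer of the Pauli realizations of the pentagram and square via Theorem~\ref{thm:Topological minor transfers magic}, which only copies operators or inserts identities. Your explicit padding of the two-qubit square realization to three qubits is a detail the paper leaves implicit, but it is the same argument.
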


\begin{cor}
Any magic game can be won with certainty by players that share only
three Bell pairs of entanglement, and only use measurements from the
three-qubit Pauli group.
\end{cor}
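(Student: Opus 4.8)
The plan is to read this off from the pipeline already built, the only new ingredient being a check that the winning strategy lives on three qubits and uses Pauli measurements with the real‑eigenvector property demanded by Theorem \ref{thm:Magic implies winnable}. Let $A$ be a magic arrangement with intersection graph $G$. By Theorem \ref{thm:Main result}, $G$ is nonplanar, so by Pontryagin--Kuratowski (Theorem \ref{thm:PK}) it contains $H \in \{K_5, K_{3,3}\}$ as a topological minor. By Proposition \ref{pro:Excluded minors magic}, $H$ is magic, and the explicit realizations in Figure \ref{fig:Magic square and pentagram with operators} realize the $K_{3,3}$ (magic square) case over $(\mathbb{C}^2)^{\otimes 2}$ using two‑qubit Pauli strings and the $K_5$ (magic pentagram) case over $(\mathbb{C}^2)^{\otimes 3}$ using three‑qubit Pauli strings; tensoring a two‑qubit Pauli with an extra factor of $I$ gives a three‑qubit Pauli acting on $(\mathbb{C}^2)^{\otimes 3}$, so in either case $H$ has a quantum realization over $(\mathbb{C}^2)^{\otimes 3}$ with all operators in the three‑qubit Pauli group. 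Running the construction in the proof of Theorem \ref{thm:Topological minor transfers magic} with this realization of $H$ assigns to each edge of $G$ either one of these three‑qubit Paulis or the identity $I^{\otimes 3}$ (again a Pauli), yielding a quantum realization of $A$ over $(\mathbb{C}^2)^{\otimes 3}$ with operators from the three‑qubit Pauli group. (This is exactly the previous corollary; I spell it out because the next step needs the particular realization, not an arbitrary one.)

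Next I would feed this realization into Theorem \ref{thm:Magic implies winnable}, which requires the realizing operators to have an orthonormal basis of real eigenvectors. Every operator occurring in the square and pentagram realizations of Figure \ref{fig:Magic square and pentagram with operators} is a real symmetric matrix: $X$, $Z$, and $I$ are real symmetric, and $Y$ appears only in pairs such as $Y \otimes Y$, whose matrix is again real (entries $0, \pm 1$) and symmetric. The padding identities $I^{\otimes 3}$ are likewise real symmetric. Hence every operator in the realization of $G$ built above is real symmetric and so has real eigenvectors, and Theorem \ref{thm:Magic implies winnable} applies: Alice and Bob share the maximally entangled state $\sum_{i=1}^{8}\left|i\right\rangle\left|i\right\rangle$ on $(\mathbb{C}^2)^{\otimes 3}\otimes(\mathbb{C}^2)^{\otimes 3}$, Alice measures the three‑qubit Pauli labelling the requested edge of $G$ (equivalently, the requested vertex of $A$), and Bob measures the commuting three‑qubit Paulis around the requested vertex of $G$, always passing both checks. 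After regrouping the three qubits on Alice's side with the matching three on Bob's side, that shared state is exactly three copies of the Bell pair $\left|00\right\rangle+\left|11\right\rangle$, so the strategy uses three shared Bell pairs and only three‑qubit Pauli measurements, which is the claim.

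The only genuinely substantive point, and the one I expect to require the most care, is tracking the real‑eigenvector hypothesis of Theorem \ref{thm:Magic implies winnable} through the whole construction: it is a property of the specific square/pentagram realizations chosen (an isolated $Y$ would violate it) and of the identity padding, and one must verify it is not destroyed when one transfers operators along the subdivided paths and pads the remaining edges with $I$ in the proof of Theorem \ref{thm:Topological minor transfers magic} — which it is not, since that step only copies existing operators and introduces identities. Everything else is bookkeeping: unwinding Theorems \ref{thm:Main result}, \ref{thm:PK}, \ref{thm:Topological minor transfers magic}, and \ref{thm:Magic implies winnable}, together with the trivial facts that $I$ adjoined to a Pauli string is a Pauli string and that the standard maximally entangled state on $n$ qubits per party is $n$ Bell pairs up to reordering tensor factors.
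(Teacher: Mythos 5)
Your proposal is correct and follows exactly the route the paper intends (and leaves mostly implicit): unwind Theorem \ref{thm:Main result} via Kuratowski and the topological-minor transfer to get a three-qubit Pauli realization from the explicit square/pentagram realizations, then feed it into Theorem \ref{thm:Magic implies winnable} and observe that the maximally entangled state on $\left(\mathbb{C}^{2}\right)^{\otimes3}\otimes\left(\mathbb{C}^{2}\right)^{\otimes3}$ is three Bell pairs. Your explicit tracking of the real-eigenvector hypothesis is a welcome detail the paper only asserts in passing (``This will be the case for all quantum realizations that we give''), and it is the right thing to worry about.
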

Our proof of Theorem \ref{thm:planar implies nonmagic} that nonplanarity
implies magic demonstrates that the substitution approach suggested
by Speelman\cite{Triangle} and used by Cleve\cite{BCS} suffices
to prove that any constraint system has no quantum solution. In our
use, it gives a short and simple proof that a given Mermin-game is
nonmagic. The substitution approach proceeds by converting the constraint
system to equations specifying the products of measurement operators,
then repeatedly solving for an operator variable that appears once
in a constraint and substituting the solution whereever it appears,
and cancelling squares of operators until no variables the equation
$I=-I$ is reached.
\begin{cor}
In the language of \cite{BCS}, the substitution approach suffices
to prove the lack of a quantum satisfying assignment of any parity
binary constraint system in which every variable appears exactly twice,
if no quantum satisfying assignment exists.
\end{cor}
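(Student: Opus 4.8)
The plan is to reduce the statement to Theorem~\ref{thm:planar implies nonmagic}, whose proof I claim \emph{is} the substitution approach specialized to this class of constraint systems. First I would set up the dictionary between a parity binary constraint system in which every variable occurs exactly twice and a signed arrangement: after the standard reduction that a repeated variable in a parity constraint cancels, every variable lies in exactly two distinct constraints, so the variables become the vertices $V$, the constraints become the hyperedges $E$, the "exactly twice" hypothesis becomes "each vertex lies in exactly two hyperedges", and the right-hand side of constraint $e$ supplies the sign $l(e)$. A quantum satisfying assignment is then exactly a quantum realization of $(V,E,l)$. Since such a system is quantum-satisfiable if and only if each of its connected components is, I may assume the constraint hypergraph is connected and hence a genuine arrangement; if the given system has no quantum satisfying assignment, some component does not, and running the substitution approach on the whole system will reach the contradiction inside that component, so it suffices to analyze a connected, unsatisfiable arrangement.

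Next I would pin down exactly when such an arrangement is unsatisfiable. By Proposition~\ref{pro:Quantum parity}, quantum realizability depends on $l$ only through its parity $p(l)$, and if $p(l)=+1$ the arrangement is classically realizable by Proposition~\ref{pro:Classical realizable implies even}, hence quantum realizable. Therefore the absence of a quantum satisfying assignment forces $p(l)=-1$ together with the arrangement being non-magic, which by the main result (Theorem~\ref{thm:Main result}) means the intersection graph of the arrangement is planar.

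The core step is to observe that the proof of Theorem~\ref{thm:planar implies nonmagic} \emph{is} the substitution approach of \cite{Triangle,BCS} for this setting, and to make the correspondence explicit. Fix a planar embedding of the intersection graph. Contracting an edge labelled by the operator $X$ amounts to taking the two constraints sharing the variable $X$ (its two vertices), multiplying their associated operator product-equations, and cancelling $X$; this is legitimate precisely because $X$ appears in exactly those two constraints and $X^{2}=I$, and it is exactly the step "solve for $X$ in one constraint and substitute it into the other". Removing an innermost self-loop that encloses nothing corresponds to a variable that, after earlier substitutions, occurs twice adjacently in one combined equation and cancels via $X^{2}=I$. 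Carrying out these operations in the order the proof prescribes — contract all edges down to a single vertex, then strip self-loops from innermost outward — terminates in a trivial equation whose two sides are $I$ and $p(l)\,I=-I$. That equation $I=-I$ is precisely the contradiction the substitution approach is designed to produce, so whenever no quantum satisfying assignment exists the substitution approach succeeds, which is the claim.

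The main obstacle I anticipate is bookkeeping rather than mathematics. After a contraction the operators around the merged vertex need no longer pairwise commute, so what is being manipulated is a cyclically ordered product-equation of operators, not a genuine parity constraint; I must check this is exactly the object the substitution approach in \cite{BCS} acts on, which it is, since that approach already works at the level of operator product equations. A second point requiring care is that planarity is genuinely used: it is what guarantees that at every stage an innermost empty self-loop exists and contributes its operator label twice \emph{adjacently} in the relevant cyclic product, so that the cancellation $X^{2}=I$ is valid; without planarity the loops may interleave and this invariant can fail, which is exactly why the substitution approach can be stuck on non-magic-looking but actually magic systems. Beyond these two points there is no new difficulty, as Theorem~\ref{thm:planar implies nonmagic} already carries out the substantive argument.
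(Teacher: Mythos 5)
Your proposal is correct and follows essentially the same route the paper intends: the corollary is offered as a direct reading of the proof of Theorem \ref{thm:planar implies nonmagic} (edge contraction is the ``solve and substitute'' step, self-loop removal is the cancellation $X^{2}=I$), combined with Theorem \ref{thm:Main result} to guarantee that the intersection graph is planar whenever no quantum satisfying assignment exists. Your extra bookkeeping --- reducing to a connected arrangement, handling a variable repeated within a single constraint, and pinning down via Propositions \ref{pro:Classical realizable implies even} and \ref{pro:Quantum parity} that unsatisfiability forces odd parity --- merely fills in details the paper leaves implicit.
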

On parity binary constraint systems that \textit{do} have a quantum
solution, like that of the magic square game, the substitution approach
of course cannot prove the lack of a solution, and the barriers that
are encountered can be understood by the lack of a planar embedding
of the intersection graph making it impossible to get operator variables
adjacent so they can be cancelled.

\section{Future Work}

We chose a generalization of magic games in which every point lie
on exactly two hyperedges. It was this property that allowed us to
define intersection graphs (otherwise, they would be hypergraphs too)
and to use the excluded-minor characterization of graph planarity.
However, it is easy to imagine a generalization in which each point
may lie on any number of lines. For instance, Cleve's definition of
parity binary constraint systems\cite{BCS} allows each variable to
appear any number of times, corresponding to a point on any number
of lines. We would like to extend our results to such games. 

We have given one way to construct quantum realizations for magic
arrangements based on the magic square and pentagram as excluded minors.
Misusing complexity-theory terms, we've given {}``reductions'' of
non-magicness to the magic square and pentagram, showing these two
construction to be {}``complete'' for the class of magic arrangements.
This construction is not fully satisfying in that the quantum realization
for the magic square and pentagram appear ad-hoc; there seems to be
no a-priori reason they should exist. A more satisfying proof would
construct an assignment of operators directly from the nonplanar intersection
graph without looking for minors. We would like to understand the
set of all possible quantum realizations, along the vein of \cite{Square geometry,Pentagram geometry}
for the magic square and pentagram. We conjecture that any quantum
realization is isomorphic to one that uses Pauli operators.

\section{Acknowledgements}

I am grateful to Richard Cleve and Oded Regev for introducing me to
this problem and to the wonderful world of Mermin games. I would like
to thank many people for helpful discussions, including Scott Aaronson,
Adam Bouland, Richard Cleve, Andy Drucker, Matt Coudron, Oded Regev,
Cedric Lin, and Thomas Vidick. Thanks to Isaac Chuang and Piotr Indyk
for helpful suggestions. This work was partially supported by an NSF
fellowship.

\end{document}